\definecolor{MyGreen}{rgb}{0.200,0.500,0.200}
\renewcommand{\emph}[1]{{\color{MyGreen}{\em #1}}}
\newcommand{\OurProb}{\texttt{LDS}}
\begin{document}

\title{Limiting Disease Spreading in Human Networks}

\author{Gargi Bakshi}
 \affiliation{%
   \institution{Indian Institute of Technology Bombay}
   \country{}}
 \email{gargibakshi@cse.iitb.ac.in}

\author{Sujoy Bhore}
 \affiliation{%
   \institution{Indian Institute of Technology Bombay}
   \country{}}
 \email{sujoy@cse.iitb.ac.in}

\author{Suraj Shetiya}
 \affiliation{%
   \institution{Indian Institute of Technology Bombay}
   \country{}}
 \email{surajs@cse.iitb.ac.in}

\begin{abstract}
The outbreak of a pandemic, such as COVID-19, causes major health crises worldwide. Typical measures to contain the rapid spread usually include effective vaccination and strict interventions (Nature Human Behaviour, 2021). Motivated by such circumstances, we study the problem of limiting the spread of a disease over a social network system.

In their seminal work (KDD 2003), Kempe, Kleinberg, and Tardos introduced two fundamental diffusion models, the \emph{linear threshold} and \emph{independent cascade}, for the influence maximization problem. In this work, we adopt these models in the context of disease spreading and study effective vaccination mechanisms. Our broad goal is to limit the spread of a disease in human networks using only a limited number of vaccines. However, unlike the influence maximization problem, which typically does not require spatial awareness, disease spreading occurs in spatially structured population networks. Thus, standard \emph{Erd\H{o}s-R\'{e}nyi graphs} do not adequately capture such networks. To address this, we study networks modeled as \emph{generalized random geometric graphs}, introduced in the seminal work of Waxman (IEEE J. Sel. Areas Commun. 1988).

We show that for disease spreading, the optimization function is neither \emph{submodular} nor \emph{supermodular}, in contrast to influence maximization, where the function is \emph{submodular}. Despite this intractability, we develop novel algorithms leveraging local search and greedy techniques, which perform exceptionally well in practice. We compare them against an exact ILP-based approach to further demonstrate their robustness. Moreover, we introduce an iterative rounding mechanism for the relaxed LP formulation. Overall, our methods establish tight trade-offs between efficiency and approximation loss.

\end{abstract}
\setcopyright{none}
\settopmatter{printacmref=false}

\maketitle

\section{Introduction}\label{sec:introduction}

The rapid spread of infectious diseases across populations is a global challenge with profound public health~\cite{gubler2002epidemic,zimmet2001global}, economic~\cite{hays2005epidemics,madhav2017pandemics,meltzer1999economic}, and social implications~\cite{yip2010impact,strong1990epidemic}. For instance, the recent outbreaks of epidemics, such as COVID-19, Ebola, and Zika have underscored the importance of understanding how diseases propagate through interconnected human networks~\cite{sampathkumar2016zika, weaver2016zika}. These pandemics have demonstrated that modern societies are increasingly vulnerable to the fast-paced transmission of pathogens, exacerbated by the dense and complex networks of social interactions. Containing such outbreaks requires more than traditional public health measures. It requires an understanding of the underlying processes that govern disease spread in interconnected populations. As such, there is an urgent need for computational tools that can simulate and predict disease dynamics over complex network structures.

Traditional epidemiological models, such as the \emph{Susceptible Infected Recovered} (SIR) and \emph{Susceptible Exposed Infected Recovered} (SEIR) frameworks (\cite{kermack1927contribution}), were initially developed to describe disease spread in homogeneous populations. However, these models are limited in their capacity to capture the complexities of real-world transmission, where individuals interact in diverse and often heterogeneous ways, typically within networks of variable connectivity. Recent research has highlighted that the topology of these networks—whether social, geographic, or biological—plays a critical role in shaping epidemic dynamics. For instance, highly connected individuals, or ``hubs'', can act as \emph{super-spreaders}, significantly accelerating the transmission process, while network structures such as community clustering can either impede or facilitate disease spread, depending on the nature of the disease and its mode of transmission (\cite{barabasi1999emergence, bedson2021review, newman2003structure}).

Globally, there has been a significant increase in urban populations, particularly in developing and developed nations. This demographic shift exacerbates the spread of diseases within these regions. Traditional network (graph) generation models, such as \emph{Erd\H{o}s-R\'{e}nyi (ER)}~\cite{erdos1960evolution} and \emph{Barab\'{a}si-Albert (BA)}~\cite{albert2002statistical}, have been used to create graph topologies representing real-world scenarios. ER graphs have been employed to model social networks~\cite{seshadhri2012community}, power grids~\cite{shahraeini2023modified}, and biological networks~\cite{przulj2004graph}, while BA graphs produce random scale-free networks through preferential attachment. BA graphs are often used to simulate structures like the World Wide Web, citation networks, and social networks. These models, however, do not consider the \emph{spatial locations} of nodes and their \emph{proximity} to each other, which is a crucial aspect of disease propagation.
Therefore, in this work, we focus on networks based on \emph{generalized random geometric graphs}, introduced in the seminal work by Waxman~\cite{waxman1988routing}, which addresses this limitation.

Furthermore, the influence propagation models introduced by Kempe, Kleinberg, and Tardos in their seminal work~\cite{kempe2003maximizing} provide a robust theoretical foundation for understanding diffusion processes, including any form of spreading mechanisms over networks. 
Given the structural parallels between influence propagation and disease transmission in human networks, it is both intuitive and compelling to apply these models to the context of epidemic spread. Specifically, their influence maximization framework, which captures the diffusion of information or behaviors through a network, highlights the role of key nodes (\emph{super-spreaders}) in amplifying contagion. This closely mirrors disease dynamics in complex networks, where transmission is shaped by node connectivity and local interactions (see~\cite{pastor2001epidemic, fortunato2010community, pastor2015epidemic}).
Hence, applying influence maximization techniques to disease spread offers the potential to identify critical nodes whose intervention can significantly impact the course of an epidemic (see~\cite{chen2009efficient, lu2016vital}). 
Moreover, one potential hope is to leverage optimization methods from the field of network science, such as those for submodular maximization, it might be possible to develop efficient, data-driven strategies for epidemic control and resource allocation (see~\cite{filmus2014monotone}).

\subsection{Our Contributions.}

Surprisingly, we find that in the context of disease spreading, the optimization function is neither \emph{submodular} nor \emph{supermodular} (Section~\ref{sec:hardness-results}), unlike in influence maximization, where the function is \emph{submodular}. Hence, we can not expect to provide greedy strategies with provable guarantees like Kempe et al.~\cite{kempe2003maximizing}. 
Furthermore, the computational intractability of the disease spreading problem can be easily established by a reduction from the well-known \texttt{Minimum Set Cover}\footnote{Given a universe $U$ and a collection of subsets $S$ of $U$, and the goal is to find the smallest number of subsets from $S$ that
cover all the elements in $U$.} problem. Specifically, we construct a bipartite graph where the objective is to select a minimum number of nodes from one partition such that the union of their neighborhoods covers\footnote{In this context, ``cover'' means that the union of the selected nodes' neighborhoods encompasses all nodes in the other partition.} all nodes in the opposite partition. Clearly, if it is possible to find such a cover, then we can find a solution for our problem and vice-versa. 
For conciseness, we omit the full reduction, as it is relatively straightforward. Hence, we primarily focus on developing efficient and robust methods. To that end, we devise the following algorithmic approaches (Section~\ref{sec:algorithms}). 

\begin{itemize}[leftmargin=*]
    \item \emph{Sampling based strategy}: We propose a general sampling based strategy (Section~\ref{subsec:sampling-approach}) to avoid enumerating all $2^{|E|}$ topologies. Sampling topologies provide us trade-off between runtime and quality of results.
    \item \emph{Linear Program with Binary variables}: We define a linear programming based approach (Section~\ref{subsec:ilpr-relaxation}) that models the vertices to be vaccinated as binary variables and the disease spreading as constraints. For a given set of topologies, the solution provided by this approach is optimal, but with a high running time.
    \item \emph{Relaxed Linear Program with Relaxations}: To overcome the high runtime, we relax the Binary Linear Program and define two rounding strategies - \texttt{TKR} and \texttt{IRP} (Section~\ref{subsec:ilpr-relaxation}). While the Top-$k$ Rounding procedure (\texttt{TKR}) considers the top-$k$ candidates for the rounding process, \texttt{IRP} Iterative rounding procedure uses the most probable vertex in each iteration to be added to the set of vaccinated vertices. 
    \item \emph{Greedy}: We also propose a Greedy approach (Section~\ref{subsec:greedy}), which initially considers an empty set of vertices and in each iteration adds one item to the set to be vaccinated. 
    \item \emph{Local Search}: Local Search has been used effectively to solve numerous problems in graph theory. In our problem, we define the notion of the neighborhood of a set of vertices and propose a local search based heuristic (Section~\ref{subsec:local-search}) to improve the results obtained from the greedy approach.
\end{itemize}

We substantiate these algorithms with rigorous runtime analysis and extensive empirical evaluations.

\section{Related Work}\label{sec:related_work}

The study of disease spread in human networks has been extensively explored through both epidemiological and network-based models. Classical models such as SIR and SEIR~\cite{kermack1927contribution} provide a foundational framework but assume homogeneous mixing, limiting their applicability to real-world heterogeneous networks. Subsequent research has emphasized the role of network topology in shaping disease transmission, demonstrating that highly connected nodes (\emph{super-spreaders}) can accelerate outbreaks~\cite{barabasi1999emergence, bedson2021review, newman2003structure}. Recent epidemics such as COVID-19, Ebola, and Zika have further highlighted the role of network-driven transmission and its broad public health, economic, and social consequences~\cite{gubler2002epidemic, zimmet2001global, hays2005epidemics, madhav2017pandemics, meltzer1999economic, yip2010impact, strong1990epidemic}. Traditional graph-based models, including Erd\H{o}s–R\'{e}nyi (ER) and Barabási–Albert (BA) networks~\cite{erdos1960evolution, albert2002statistical}, have been widely used to simulate transmission patterns in domains such as social networks~\cite{seshadhri2012community}, power grids~\cite{shahraeini2023modified}, and biological systems~\cite{przulj2004graph}. However, these models often overlook spatial proximity, a crucial factor in disease spread. Generalized random geometric graphs (RGGs) address this limitation by incorporating spatial constraints~\cite{waxman1988routing}.

In parallel, influence propagation models~\cite{kempe2003maximizing} have provided theoretical insights into diffusion processes. Given their structural parallels with epidemic spread, these models have been applied to identify key nodes for disease containment~\cite{pastor2001epidemic, fortunato2010community, pastor2015epidemic}. Recent work has leveraged submodular optimization techniques~\cite{chen2009efficient, lu2016vital, filmus2014monotone} to develop efficient, targeted intervention strategies, highlighting the potential for computational approaches in epidemic control.
\section{Preliminaries}\label{sec:preliminaries}

\subsection{Notations}\label{subsec:labels}

We briefly summarise the intended behaviour of the model and thus motivate our choice of the model.
Our choice of the model hinges on the process of disease spreading.
A set of individuals who may be part of a community or vicinity are chosen to be part of the network.
As these individuals interact in the physical world, these interactions need to be considered in the dynamics of the disease-spreading model.
Moreover, some interactions might last for a longer time than others and hence, these individuals are more susceptible to infection.
If one person from a community is infected, then the infection could spread due to interactions. \emph{Stronger the frequency or duration of interaction, the greater the chance of contracting a disease for the people interacting}.

Based on these constraints, we model the spread of disease similar to that of influence spreading networks (independent cascade and linear threshold models) introduced by Kempe et al.~\cite{kempe2003maximizing}.

\smallskip
\noindent\textbf{Network model}: Consider a graph $G(V, E)$, where the nodes $V$ represent a set of individuals and the edges $E$ represent the connections between people. 
We model the extent of an individual $v_j$ contracting the disease via a neighbor $v_i$ based on the strength of the interactions between them. Moreover, individuals with less optimal hygiene practices may be more susceptible to contracting a disease, which makes the interaction ``\emph{non-uniform}''. The number of nodes $|V|$ is denoted by $n$, and the number of edges $|E|$ by $m$.

Erd\H{o}s-R\'{e}nyi (ER)~\cite{erdos1960evolution} have been widely used as fundamental models for network generation. However, these networks do not inherently account for spatial awareness or proximity information. One may question the relevance of incorporating \emph{proximity} in the context of disease spread. Notably, human populations exhibit natural clustering tendencies, which become evident when visualizing population distributions. This phenomenon is clearly illustrated in the population distributions of two major U.S. states: New York state and Texas (see Figure~\ref{fig:population-map}).

Proximity-based contact networks are crucial for modeling disease spread. Salathé et al.~\cite{salathe2010high} used high-resolution sensor data to map transmission, while Ferretti et al.~\cite{ferretti2020quantifying} highlighted digital contact tracing for epidemic control. Block et al.~\cite{block2020social} further demonstrated how network-based distancing strategies can flatten infection curves. These studies underscore the importance of spatially-aware models for effective disease mitigation.

\begin{figure}
    \centering
    \includegraphics[width=0.9\linewidth]{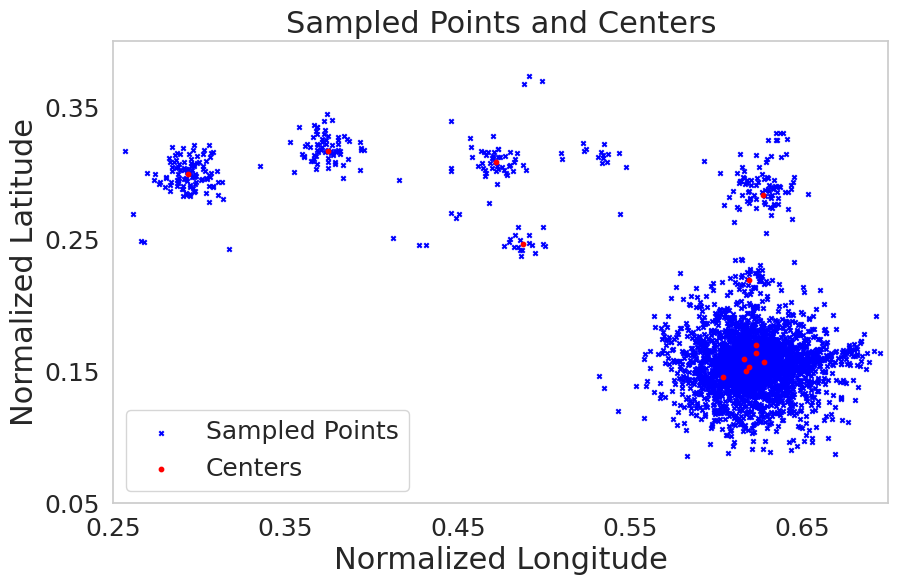}

    \includegraphics[width=0.9\linewidth]{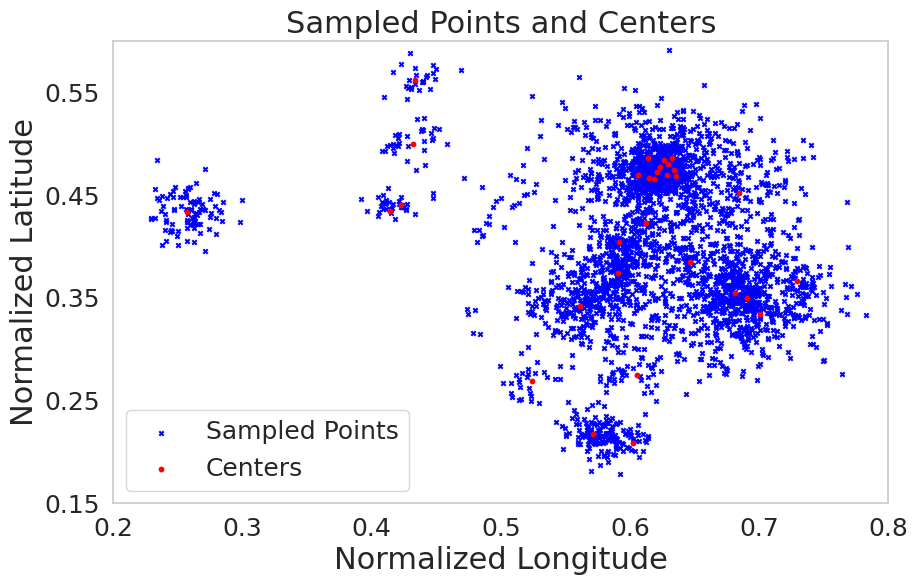}

    \caption{Population map of New York (top) and Texas (bottom)}
    \label{fig:population-map}

\end{figure}

We now define the two network models, (a) \emph{Independent Cascade} and (b) \emph{Linear Threshold}, in the context of Disease Spreading. 

\noindent\textbf{Independent Cascade (IC) Model}: 
The IC model is represented as a directed graph $G(V, E)$, where $V$ is the set of individuals, and $E$ is the set of directed edges. Each edge $(v_i,v_j)\in E$ is associated with a disease transmission probability $p_{ij} \in [0,1]$, representing the likelihood that a node $v_i$ will propagate the disease to a node $v_j$. The spreading process unfolds in discrete time steps, beginning with an initial set $\mathcal{I}\subset V$ of infected nodes. At each time step $t$, newly infected nodes from $t-1$ attempt to infect their uninfected neighbors with the assigned probabilities. Each spreading attempt is independent, and a node gets at most one chance to influence a neighbor. The process continues iteratively until no further activations occur, at which point the diffusion terminates. 

In our context, an individual $v_j$ may be more susceptible to contracting a disease from a close associate $v_i$, if the individual $v_j$ has bad hygiene practices. Hence, the set of edges is directed, with a probability associated with each edge. The probability $p_{ij}$ refers to the probability that the nodes= $v_j$ contracts a disease from the infected nodes $v_i$.

\noindent\textbf{Linear Threshold Model}: 
In the LT model, the strengths of interaction between people can be modeled as weights on the directed edges that connect the two individuals. This weight signifies the extent of transmission of the disease to an individual through these interactions. If an individual's body is unable to fight off the infection, then they get infected by the disease. 
Hence, we model the resistance of an individual as a threshold associated with that individual. Thus, if an individual $v_j$ closely interacts with many infected people beyond the body's resistance, they contract the disease. Moreover, they too begin to transmit the disease to others.

Note that because it is difficult to assess any individual's resistance to disease, we model it as a random number between $0$ and $1$. Additionally, as the edge weights indicate the amount of virus being delivered, it can add up to at most $1$, where $0$ represents no virus being delivered and $1$ is the highest amount that can be delivered.

To illustrate this network model, we present a running example.

\smallskip
\textbf{Running example} (\emph{Housing society}): Consider a bunch of $6$ individuals who live in the same housing society (or building). Some people go to the same vendor $v_2$ daily to buy vegetables. As $v_2$ is a young and healthy individual who often adheres to hygienic practices, their chances of contracting a disease from a single interaction are lower. But if $v_2$ contracts the disease, they may spread it to other individuals. This depends on the level of hygiene practices in their personal lives. Moreover, in this society, some individuals tend to buy newspapers from $v_5$. Friends like $v_3$ and $v_6$ meet over tea and snacks often. These interactions are modeled as probabilities (weights in the case of the LT model)
on edges and is illustrated in Figure~\ref{fig:sample-graph} (left).  

\begin{figure}
    \centering
    \resizebox{0.6\linewidth}{!}{\begin{tikzpicture}[
    triangle/.style = {fill=black!10, regular polygon, regular polygon sides=3 },
    node rotated/.style = {rotate=90},
  ]
  \tikzstyle{every node}=[font=\fontsize{15}{15}\selectfont]
\node[circle, fill={rgb,255:red,218;green,232;blue,252}, text=black, line width=0.3mm, draw={rgb,255:red,108;green,142;blue,191}, scale=1, minimum size=1cm](0) at (0, 0){$v_1$};
\node[circle, fill={rgb,255:red,218;green,232;blue,252}, text=black, line width=0.3mm, draw={rgb,255:red,108;green,142;blue,191}, scale=1, minimum size=1cm][right=2cm of 0](1) {$v_2$};
\node[circle, fill={rgb,255:red,218;green,232;blue,252}, text=black, line width=0.3mm, draw={rgb,255:red,108;green,142;blue,191}, scale=1, minimum size=1cm][right=1.5cm of 1](3) {$v_4$};
\node[circle, fill={rgb,255:red,218;green,232;blue,252}, text=black, line width=0.3mm, draw={rgb,255:red,108;green,142;blue,191}, scale=1, minimum size=1cm][right=1.5cm of 3](4) {$v_5$};
\node[circle, fill={rgb,255:red,218;green,232;blue,252}, text=black, line width=0.3mm, draw={rgb,255:red,108;green,142;blue,191}, scale=1, minimum size=1cm][below=1.5cm of 3](2) {$v_3$};
\node[circle, fill={rgb,255:red,218;green,232;blue,252}, text=black, line width=0.3mm, draw={rgb,255:red,108;green,142;blue,191}, scale=1, minimum size=1cm][below=1.5cm of 1](5) {$v_6$};

\draw[->, line width=0.5mm, draw={rgb,255:red,108;green,142;blue,191}] (0) to[out=30,in=150]  node[above, xshift=0pt, yshift=-1pt]{0.2} (1);
\draw[->, line width=0.5mm, draw={rgb,255:red,108;green,142;blue,191}] (1) to[out=-150,in=-30]  node[below, xshift=0pt, yshift=-1pt]{0.1} (0);

\draw[->, line width=0.5mm, draw={rgb,255:red,108;green,142;blue,191}] (1) to[out=30,in=150]  node[above, xshift=0pt, yshift=-1pt]{0.8} (3);
\draw[->, line width=0.5mm, draw={rgb,255:red,108;green,142;blue,191}] (3) to[in=-30,out=-150]  node[below, xshift=0pt, yshift=-1pt]{0.1} (1);

\draw[->, line width=0.5mm, draw={rgb,255:red,108;green,142;blue,191}] (1) to[out=-120,in=120]  node[above, xshift=-13pt, yshift=-15pt] {0.3} (5);
\draw[->, line width=0.5mm, draw={rgb,255:red,108;green,142;blue,191}] (5) to[in=-60,out=60]  node[below, xshift=13pt, yshift=0pt] {0.1} (1);


\draw[->, line width=0.5mm, draw={rgb,255:red,108;green,142;blue,191}] (2) to[out=-150,in=-30]  node[below, xshift=0pt, yshift=-3pt] {0.5} (5);
\draw[->, line width=0.5mm, draw={rgb,255:red,108;green,142;blue,191}] (5) to[out=30,in=150]  node[below, xshift=0pt, yshift=-3pt] {0.5} (2);

\draw[->, line width=0.5mm, draw={rgb,255:red,108;green,142;blue,191}] (2) to[out=30,in=-110] node[below, xshift=0pt, yshift=0pt, rotate=45] {0.2} (4);
\draw[->, line width=0.5mm, draw={rgb,255:red,108;green,142;blue,191}] (4) to[out=-155,in=60] node[below, xshift=0pt, yshift=0pt, rotate=45] {0.2} (2);

\draw[->, line width=0.5mm, draw={rgb,255:red,108;green,142;blue,191}] (3) to[out=25,in=155]  node[above] {0.1} (4);
\draw[->, line width=0.5mm, draw={rgb,255:red,108;green,142;blue,191}] (4) to[out=-155,in=-25]  node[above] {0.1} (3);


\end{tikzpicture}}

    \resizebox{0.6\linewidth}{!}{\begin{tikzpicture}[
    triangle/.style = {fill=black!10, regular polygon, regular polygon sides=3 },
    node rotated/.style = {rotate=90},
  ]
  \tikzstyle{every node}=[font=\fontsize{15}{15}\selectfont]
\node[circle, fill={rgb,255:red,252;green,218;blue,232}, text=black, line width=0.3mm, draw={rgb,255:red,191;green,108;blue,142}, scale=1, minimum size=1cm](0) at (0, 0){$v_1$};

\node[circle, fill={rgb,255:red,218;green,252;blue,232}, text=black, line width=0.3mm, draw={rgb,255:red,142;green,191;blue,108}, scale=1, minimum size=1cm][right=2cm of 0](1) {$v_2$};

\node[circle, fill={rgb,255:red,218;green,232;blue,252}, text=black, line width=0.3mm, draw={rgb,255:red,108;green,142;blue,191}, scale=1, minimum size=1cm][right=1.5cm of 1](3) {$v_4$};
\node[circle, fill={rgb,255:red,218;green,232;blue,252}, text=black, line width=0.3mm, draw={rgb,255:red,108;green,142;blue,191}, scale=1, minimum size=1cm][right=1.5cm of 3](4) {$v_5$};
\node[circle, fill={rgb,255:red,218;green,232;blue,252}, text=black, line width=0.3mm, draw={rgb,255:red,108;green,142;blue,191}, scale=1, minimum size=1cm][below=1.5cm of 3](2) {$v_3$};
\node[circle, fill={rgb,255:red,218;green,232;blue,252}, text=black, line width=0.3mm, draw={rgb,255:red,108;green,142;blue,191}, scale=1, minimum size=1cm][below=1.5cm of 1](5) {$v_6$};

\draw[->, line width=0.5mm, draw={rgb,255:red,108;green,142;blue,191}] (0) to[out=30,in=150]  node[above, xshift=0pt, yshift=-1pt]{0.2} (1);
\draw[->, line width=0.5mm, draw={rgb,255:red,108;green,142;blue,191}] (1) to[out=-150,in=-30]  node[below, xshift=0pt, yshift=-1pt]{0.1} (0);

\draw[->, line width=0.5mm, draw={rgb,255:red,108;green,142;blue,191}] (1) to[out=30,in=150]  node[above, xshift=0pt, yshift=-1pt]{0.8} (3);
\draw[->, line width=0.5mm, draw={rgb,255:red,108;green,142;blue,191}] (3) to[in=-30,out=-150]  node[below, xshift=0pt, yshift=-1pt]{0.1} (1);

\draw[->, line width=0.5mm, draw={rgb,255:red,108;green,142;blue,191}] (1) to[out=-120,in=120]  node[above, xshift=-13pt, yshift=-15pt] {0.3} (5);
\draw[->, line width=0.5mm, draw={rgb,255:red,108;green,142;blue,191}] (5) to[in=-60,out=60]  node[below, xshift=13pt, yshift=0pt] {0.1} (1);


\draw[->, line width=0.5mm, draw={rgb,255:red,108;green,142;blue,191}] (2) to[out=-150,in=-30]  node[below, xshift=0pt, yshift=-3pt] {0.5} (5);
\draw[->, line width=0.5mm, draw={rgb,255:red,108;green,142;blue,191}] (5) to[out=30,in=150]  node[below, xshift=0pt, yshift=-3pt] {0.5} (2);


\draw[->, line width=0.5mm, draw={rgb,255:red,108;green,142;blue,191}] (2) to[out=30,in=-110] node[below, xshift=0pt, yshift=0pt, rotate=45] {0.2} (4);
\draw[->, line width=0.5mm, draw={rgb,255:red,108;green,142;blue,191}] (4) to[out=-155,in=60] node[below, xshift=0pt, yshift=0pt, rotate=45] {0.2} (2);

\draw[->, line width=0.5mm, draw={rgb,255:red,108;green,142;blue,191}] (3) to[out=25,in=155]  node[above] {0.1} (4);
\draw[->, line width=0.5mm, draw={rgb,255:red,108;green,142;blue,191}] (4) to[out=-155,in=-25]  node[above] {0.1} (3);

\end{tikzpicture}}

    \caption{Example graph corresponding to a \emph{society} (top). Sample \emph{society} graph with infected node $\{v_1\}$ (bottom).}
    \label{fig:sample-graph}

\end{figure}
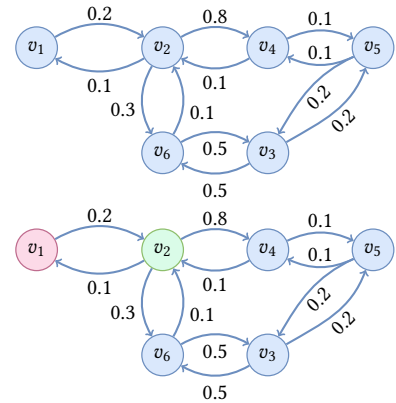

\smallskip\noindent
\textbf{Infected nodes}:
The governing body at some time-step identifies a set of individuals as infected nodes in the network. These nodes have the ability to spread the disease and create more infected nodes in the network. The initial set of nodes that are infected are denoted as $\mathcal{I}$.

\smallskip\noindent
\textbf{Vaccinated nodes}: Once the disease is detected in some part of the network, a few nodes in the network will be vaccinated. These vaccinated nodes cannot be infected and thus cannot transmit diseases. We denote the set of vaccinated nodes $\mathcal{S}$. 

In our running example, we mark the vaccinated nodes with the colour green and the infected nodes with the color red. For example, Figure~\ref{fig:sample-graph} (right) shows the vaccinated nodes $\mathcal{S}=\{v_2\}$ in the colour green and the initial set of infected nodes as $\mathcal{I}=\{v_1\}$ in red.

\smallskip\noindent
\textbf{Spreading mechanism}:

To describe the spreading mechanism in detail, we use similar notions as Kempe et al.~\cite{kempe2003maximizing}. If a node is infected in the current time step, we call such a node live. The disease tries to spread from a live node $v_i$ to its neighbor $v_j$ based on the strength of the interaction ($w_{ij}$ for the LT model and $p_{ij}$ for the IC model). An important distinction is that the transmission only succeeds if $v_j$ is not vaccinated. Moreover, in the case of the LT model, the threshold $\tau_j$ of $v_j$ should be exceeded by this change.

Kempe et al.~\cite{kempe2003maximizing} show that a deterministic sample of the LT model is equivalent to a network created by selecting a set of \emph{live edges}. These live edges are sampled from $G$, such that each node $v_j$ has at most one incoming edge. The probability of sampling an incoming edge from a node $v_i$ is equal to the edge weight. Node $v_j$ has no incoming edge with probability $1 - \sum_{j \in N(i)} w_{ij}$, where $w_{ij}$ is the weight of the directed edge connecting $v_i$ to $v_j$. 
The probability of sampling the edge $(v_i, v_j)$ is $w_{ij}$. Moreover, the sampled graph is an unweighted network with the property, \emph{if a node $v_j$ is reachable from any infected node, it contracts the disease} (see \cite{kempe2003maximizing} Claim 2.3).

A similar notion of live edges is present in the IC model. If a node is infected in the current time-step, we call such a node as current. The disease tries to spread from the current nodes $v_i$ to its neighbor $v_j$ for one time step with the probability $p_{ij}$. This process is similar to flipping a biased coin with a probability of $p_{ij}$. Instead of performing coin flips once a node is activated, one could perform $|E|$ coin flips independently, beforehand. This process determines the topology $X$ (similar to LT model) which will be used for disease spreading. 

The notable difference between our work and \cite{kempe2003maximizing} is the notion of vaccinated nodes, which cannot be infected (or \emph{influenced}). An alternate manner of thinking about disease-spreading is by considering the process as an influence-maximization over a modified network. The modified network is a vertex-induced sub-graph after deleting the vaccinated nodes ($V-\mathcal{S}$). Note that the infected set of nodes $\mathcal{I}$ is known beforehand, unlike influence maximisation.

We are now ready to introduce our problem.

\subsection{Problem definition}\label{subsec:problem-definition}

We assume that the initial set of infected nodes (say, $\mathcal{I}$) is known. To stop the spread of a disease, one idea is to identify the \emph{super spreaders} and vaccinate. An additional constraint is that the number of vaccines is limited due to cost and supply constraints. 
Hence, we need to carefully decide choose the nodes which need to be vaccinated. We formally define the problem below.

\begin{definition}[Limiting disease spread]
Given a population network (modeled as a graph $G(V, E)$), a set of infected nodes $\mathcal{I}\subseteq V$ and a budget of $k$ vaccines, find the $k$ nodes to be vaccinated in the graph which minimises the number of nodes that will be infected when the disease spreads in the network.     
\end{definition}

The optimization problem can be mathematically written as follows,

\begin{align}
    \underset{\mathcal{S}}{minimize}\ \ \sigma(\mathcal{S}, \mathcal{I})\ \ \ \  \ \ s.t.\ \ \ |\mathcal{S}| = k
\end{align}
where $\sigma(\mathcal{S}, \mathcal{I})$ represents the expected number of nodes infected through the disease-spreading process. 
\section{Hardness results}\label{sec:hardness-results}

\subsection{Submodularity and supermodularity results}\label{subsec:submodular}

Among the numerous papers that have looked at information-maximization "\emph{style}" problems~\cite{kempe2003maximizing}, one method that seems to provide a good approximation is the greedy technique. A common theoretical approach to prove the \emph{goodness} of the greedy technique~\cite{fisher1978analysis, nemhauser1978analysis} is to prove two important properties of the optimization function; (i) monotone and (ii) submodularity of the optimization function.
In this section, we analyze the optimization function of our problem as described in Section~\ref{subsec:problem-definition} for the two properties. 

\begin{lemma}[Monotone]\label{lemma:monotone}
    The optimization function of the \OurProb\ problem is monotone.
\end{lemma}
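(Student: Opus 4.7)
The plan is to reduce monotonicity of $\sigma(\mathcal{S},\mathcal{I})$ (read as: vaccinating more never hurts) to a purely combinatorial reachability statement on sampled topologies, mirroring the live-edge technique already invoked from Kempe et al.~\cite{kempe2003maximizing}. Concretely, for both the LT and the IC models we can write
\begin{equation}
\sigma(\mathcal{S},\mathcal{I}) \;=\; \mathbb{E}_{X}\bigl[\,f_X(\mathcal{S},\mathcal{I})\,\bigr],
\end{equation}
where $X$ is a random live-edge topology (sampled by the LT rule with at most one incoming live edge per node, or by the IC rule of independent $p_{ij}$-coin flips per edge), and $f_X(\mathcal{S},\mathcal{I})$ is the number of vertices of $V \setminus \mathcal{S}$ that are reachable in $X$ from $\mathcal{I}\setminus \mathcal{S}$ after deleting all vaccinated vertices (and their incident edges). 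This is exactly the equivalence recalled in the paragraph preceding the lemma: in a fixed topology $X$, a non-vaccinated node gets infected if and only if it is reachable from some initially infected node in the subgraph $X[V\setminus\mathcal{S}]$.

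Given this representation, I would prove monotonicity (non-increasingness) of $\sigma$ pointwise in $X$ first. Fix any topology $X$ and any $\mathcal{S}\subseteq \mathcal{S}' \subseteq V$. Then $X[V\setminus \mathcal{S}']$ is a vertex-induced subgraph of $X[V\setminus \mathcal{S}]$, so any directed path from $\mathcal{I}\setminus \mathcal{S}'$ in the smaller graph is also a path in the larger one. Consequently the set of reachable vertices can only shrink, which yields $f_X(\mathcal{S}',\mathcal{I}) \le f_X(\mathcal{S},\mathcal{I})$. Taking expectation over $X$ preserves the inequality and gives $\sigma(\mathcal{S}',\mathcal{I}) \le \sigma(\mathcal{S},\mathcal{I})$, i.e.\ $\sigma(\cdot,\mathcal{I})$ is monotonically non-increasing in $\mathcal{S}$, which is the sense of ``monotone'' appropriate for a minimization objective.

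The main subtlety — and the only thing I expect to need a sentence of care — is the treatment of vertices that lie in $\mathcal{I}\cap \mathcal{S}$, i.e.\ an initially infected node that is also vaccinated. Using the convention spelled out in Section~\ref{subsec:labels} that vaccinated nodes cannot be infected and cannot transmit, such a node is simply excised from the contagion process, so the reachability set is taken from $\mathcal{I}\setminus \mathcal{S}$ inside $X[V\setminus \mathcal{S}]$; under this convention the subgraph-containment argument above works verbatim. If instead one prefers the convention $\mathcal{S}\cap \mathcal{I}=\emptyset$, the proof is identical. No analytical machinery beyond monotonicity of reachability in subgraphs and linearity of expectation is required; the live-edge equivalence already cited does all the work.
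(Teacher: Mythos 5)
Your proof is correct and follows essentially the same route as the paper: fix a sampled live-edge topology $X$, argue monotonicity pointwise (the paper asserts the inequality directly, you justify it via reachability in vertex-induced subgraphs), and then pass to the expectation over topologies. The only difference is cosmetic — the paper states the inequality for the count of \emph{saved} nodes ($\geq$) while you state it for the count of \emph{infected} nodes ($\leq$), and your explicit subgraph-containment argument supplies the justification the paper leaves implicit.
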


\begin{proof}
 
    Consider a topology $X$ that is determined by performing $|M|$ coin tosses beforehand. Additionally, the set of infected nodes $\mathcal{I}$ is given to us beforehand. Assume that the nodes of $S$ are selected and the number of nodes \emph{saved} are counted. Adding one more node to $\mathcal{S}$ prevents more nodes in the network from being \emph{saved} from the disease, as given below.
    \begin{align*}
        \sigma_X(\mathcal{S}\cup \{v_i\}, \mathcal{I}) \geq \sigma_X(\mathcal{S}, \mathcal{I})  
    \end{align*}
    As the monotone property is true for any given topology $X$, it extends to the general weighted version. Hence, proved.
\end{proof}

On the flip side, there exist topologies $X$ where the function $\sigma_X(\mathcal{S}, \mathcal{I})$ is neither submodular nor super-modular. We prove that in the following theorem.

\begin{lemma}\label{thm:negative-sub-super-modular}
    Given a graph $G(V, E)$ and a set of infected nodes $\mathcal{I}$, there exists a topology $X$ for which the function $\sigma_X(\mathcal{S}, \mathcal{I})$ is neither submodular nor super-modular.
\end{lemma}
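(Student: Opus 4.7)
The plan is to construct a small graph $G(V,E)$ and infected set $\mathcal{I}$, then exhibit a specific live-edge realization $X$ on which $\sigma_X$ simultaneously fails both the diminishing-returns inequality (submodularity) and the increasing-returns inequality (supermodularity). Since $X$ fixes which edges are live, $\sigma_X(\mathcal{S},\mathcal{I})$ becomes a deterministic count of vertices disconnected from $\mathcal{I}$ in the subgraph obtained by removing $\mathcal{S}$, so all marginal values can be read off by hand.

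For non-submodularity I would use a \emph{diamond} gadget: an infected source $i \in \mathcal{I}$ together with two parallel live paths $i \to a \to c$ and $i \to b \to c$. Taking $A = \emptyset$, $B = \{b\}$, and $v = a$, adding $a$ to $A$ saves only $a$ itself (since $c$ is still reachable via $b$), while adding $a$ to $B$ saves both $a$ and $c$, because the two blockers together disconnect $c$ from $i$. The marginal contribution of $a$ therefore strictly \emph{increases} when moving from $A$ to $B$, refuting the submodular inequality.

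For non-supermodularity I would use a \emph{path} gadget: an infected source $i \in \mathcal{I}$ with a live path $i \to a \to b$. Taking $A = \emptyset$, $B = \{b\}$, and $v = a$, vaccinating $a$ in $A$ saves both $a$ and $b$ (since $a$ is the unique cut vertex on the only infection route), while vaccinating $a$ in $B$ saves only $a$ itself (because $b$ is already blocked). The marginal contribution of $a$ therefore strictly \emph{decreases} from $A$ to $B$, refuting the supermodular inequality. Placing the two gadgets in disjoint components of a single topology $X$ yields one witness for both failures at once.

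The main subtlety is notational rather than combinatorial: the problem definition writes $\sigma$ as the expected number of \emph{infected} nodes, whereas the proof of Lemma~\ref{lemma:monotone} counts \emph{saved} nodes. I would adopt the saved-node convention used in Lemma~\ref{lemma:monotone} so that this lemma is stated consistently with the preceding monotonicity claim; under the infected-node convention the roles of the two gadgets simply swap (the diamond would refute supermodularity and the path would refute submodularity), so the ``neither-nor'' conclusion is robust to this choice.
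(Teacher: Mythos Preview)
Your proposal is correct and rests on the same two combinatorial observations the paper uses: a \emph{diamond} (two parallel live paths from the infected source to a common target) to violate submodularity, and a \emph{cut-vertex/path} configuration to violate supermodularity. Your handling of the saved-vs-infected convention is also right and matches how the paper implicitly uses $\sigma_X$ in Lemma~\ref{lemma:monotone}.

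The packaging differs slightly. The paper builds a single connected topology---a diamond $v_1\!-\!v_2\!-\!v_4$, $v_1\!-\!v_3\!-\!v_4$ with a long chain $v_4\!-\!v_5\!-\!\cdots\!-\!v_n$ hanging off the apex---and reuses the same added vertex $v_2$ for both directions: with $A=\{v_3\}\subset B=\{v_3,v_4,v_5\}$ the marginal of $v_2$ drops from $n-2$ to $1$ (non-supermodular), while with $A=\emptyset\subset B=\{v_3\}$ it rises from $1$ to $n-2$ (non-submodular). Your version instead places two minimal gadgets in disjoint components. Both are valid; the paper's construction additionally shows the marginal gap can be made $\Theta(n)$, whereas yours is the cleanest possible witness with constant-size components.
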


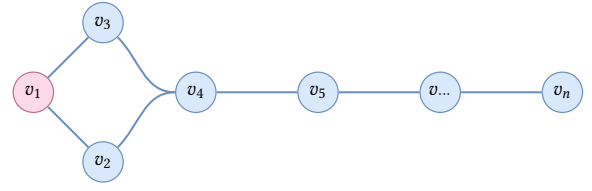
\begin{figure}[t]
	\centering
	\resizebox{.9\linewidth}{!}{\begin{tikzpicture}[
    triangle/.style = {fill=black!10, regular polygon, regular polygon sides=3 },
    node rotated/.style = {rotate=90},
  ]
  \tikzstyle{every node}=[font=\fontsize{15}{15}\selectfont]

\node[circle, fill={rgb,255:red,252;green,218;blue,232}, text=black, line width=0.3mm, draw={rgb,255:red,191;green,108;blue,142}, scale=1, minimum size=1cm](0) at (1, 0){$v_1$};

\node[circle, fill={rgb,255:red,218;green,232;blue,252}, text=black, line width=0.3mm, draw={rgb,255:red,108;green,142;blue,191}, scale=1, minimum size=1cm][below right=1cm and 1cm of 0](1) {$v_2$};

\node[circle, fill={rgb,255:red,218;green,232;blue,252}, text=black, line width=0.3mm, draw={rgb,255:red,108;green,142;blue,191}, scale=1, minimum size=1cm][above right=1cm and 1cm of 0](2) {$v_3$};

\node[circle, fill={rgb,255:red,218;green,232;blue,252}, text=black, line width=0.3mm, draw={rgb,255:red,108;green,142;blue,191}, scale=1, minimum size=1cm][right=3cm of 0](3) {$v_4$};

\node[circle, fill={rgb,255:red,218;green,232;blue,252}, text=black, line width=0.3mm, draw={rgb,255:red,108;green,142;blue,191}, scale=1, minimum size=1cm][right=2cm of 3](4) {$v_5$};

\node[circle, fill={rgb,255:red,218;green,232;blue,252}, text=black, line width=0.3mm, draw={rgb,255:red,108;green,142;blue,191}, scale=1, minimum size=1cm][right=2cm of 4](5) {$v_{\cdots}$};

\node[circle, fill={rgb,255:red,218;green,232;blue,252}, text=black, line width=0.3mm, draw={rgb,255:red,108;green,142;blue,191}, scale=1, minimum size=1cm][right=2cm of 5](6) {$v_n$};

\draw[line width=0.5mm, draw={rgb,255:red,108;green,142;blue,191}] (0) to[out=-45,in=135]  node[above, xshift=0pt, yshift=-1pt]{} (1);

\draw[line width=0.5mm, draw={rgb,255:red,108;green,142;blue,191}] (0) to[out=45,in=-135]  node[above, xshift=0pt, yshift=-1pt]{} (2);

\draw[line width=0.5mm, draw={rgb,255:red,108;green,142;blue,191}] (2) to[out=-45,in=-180]  node[above, xshift=0pt, yshift=-1pt]{} (3);

\draw[line width=0.5mm, draw={rgb,255:red,108;green,142;blue,191}] (1) to[out=45,in=-180]  node[above, xshift=0pt, yshift=-1pt]{} (3);

\draw[line width=0.5mm, draw={rgb,255:red,108;green,142;blue,191}] (3) to[out=0,in=-180]  node[above, xshift=0pt, yshift=-1pt]{} (4);

\draw[line width=0.5mm, draw={rgb,255:red,108;green,142;blue,191}] (4) to[out=0,in=-180]  node[above, xshift=0pt, yshift=-1pt]{} (5);

\draw[line width=0.5mm, draw={rgb,255:red,108;green,142;blue,191}] (5) to[out=0,in=-180]  node[above, xshift=0pt, yshift=-1pt]{} (6);

\end{tikzpicture}}
	\caption{Sample topology $X$.}
	\label{fig:sample_modular_graph}
\end{figure}

\begin{proof}
    For this proof, we construct a topology $X$ for which we prove the result. Consider the topology in Figure~\ref{fig:sample_modular_graph}. 
    
    We first prove that for a selected set of nodes $A\subset B$, addition of a node $v_i$ to $A$ provides more gain to $A$ than adding the same node to $B$. Consider the sets $A=\{v_3\}$ and $B=\{v_3, v_4, v_5\}$. Adding the node $v_2$ to $A$ would save the remaining $n-3$ nodes in the network. Hence, the gain is 
    \begin{align*}
    \sigma_X\{A\cup \{v_2\}, \mathcal{I}=\{v_1\})-\sigma_X(A, \mathcal{I}=\{v_1\})=n-2    
    \end{align*}
    But adding $v_2$ to $B$, $\sigma_X$ would gain only $1$ node. Hence, for the chosen set $A\subset B$, the gain to the subset $A$ is greater than that of the superset $B$.

    We now prove the exact opposite inequality using a different set of nodes. Consider the set of nodes $A=\emptyset$ and $B=\{v_3\}$. While $\sigma_X$ gains $1$ node upon adding the node $v_2$ to $A$, adding the same node $v_2$ to $B$ would save all the other nodes in the network (\emph{i.e.} gain of $n-2$ nodes). Thus, the gain to the superset is greater than that of the subset.

    Based on the above two choices of $A$ and $B$ ($A\subseteq B$) and a given set of infected nodes $\mathcal{I}$, we can conclude that the function $\sigma_X$ is neither submodular nor super-modular.
\end{proof}

\section{Algorithmic results}\label{sec:algorithms}

We first present the computational challenge that is inherent in the problem. Then we describe a sampling solution approach that alleviates the computational bottleneck at the cost of efficacy. We then present a broad range of approaches to solve the problem which utilizes the sampling-based framework. To make the time complexity bounds applicable to both Linear Threshold and Independent Cascade models, we maintain the number of edges in each sample to be $\mathcal{O}(m)$, unlike the Linear Threshold bound of $\mathcal{O}(n)$.

\subsection{Sampling strategy}\label{subsec:sampling-approach}

Consider a graph with $n$ nodes and $m$ edges. Based on the weights on each edge and threshold of each vertex, the propagation could infect different nodes. To further analyse our scenario, we describe a technique of sampling unweighted topologies and performing our computation on these topologies.

For the independent cascade model, each edge has a probability associated with it which indicates the probability of a node infecting its neighbor. To simulate a run of the propagation process, a coin flip is performed at each time step to check whether an edge should be retained or discarded (whether a node infects its neighbor or not). We can instead perform all $m$ coin flips at the start to obtain a topology, which is a directed graph without any weights on its edges. For the linear threshold model, each edge has a score/weight associated with it, which indicates the extent of disease-propagation across the edge, and each node has a threshold that indicates its susceptibility to the disease. We can perform $n$ coin flips at the start to obtain the thresholds for all nodes, but instead use the alternate equivalent formulation described by Kempe et al.~\cite{kempe2003maximizing} of sampling, for each node, at most one incoming edge with a probability equal to its edge weight, resulting in an undirected graph with unweighted edges, similar to the independent cascade case.

A naive approach to enumerate the set of all possible topologies and compute the set of $k$ nodes that minimise the spread of the disease. While this approach will be able to compute the optimal set of $k$ nodes, enumerating the set of all topologies is exponential in $m$ ($\mathcal{O}({2^m})$ topologies). Instead, we propose a sampling based approach to overcome the computational bottleneck. Our approach is based on randomly sampling $s$ topologies from the set of all topologies and solving the problem of selecting $k$ nodes such that the sum of the nodes infected across all topologies is minimized.

Upon sampling $s$ topologies and minimizing over these samples, we estimate the number of nodes saved. This provides us with an estimate on the optimal value of $\sigma$. From Bayraksan et al.~\cite{bayraksan2006assessing} we know this estimate has a negative bias, i.e. it always estimates a score that is lower than the optimal value. Hence, we use the estimated score of $\sigma$ over $s$ samples as a solution to our problem. 

We propose different solutions to the sub-problem of finding the $k$ nodes which minimize the number of infected nodes over the $s$ sampled topologies next.

\subsection{Binary Linear Program and relaxation}\label{subsec:ilpr-relaxation}

Our first approach is a linear program with binary variables (BLP) to select the set of nodes $\mathcal{S}$ for vaccination. The BLP is formulated based on the weights on the edges $W$. Based on $W$ around $2^{|E|}$ topologies could be possible. The exact computation of $\mathcal{S}$ using a Binary Linear Program is formulated over the set of all possible topologies $\mathcal{T}$.

\begin{align}\label{eqn:exponetital-ilp}
    \begin{split}
        minimize &\sum_{G(V,E)\in \mathcal{T}}\ \ \sum_{i\in V}\ \ \mu_G\ x_{i,G}\\
        &  i\in V,\ G(V,E)\in \mathcal{T}\ \  \forall\ {j=\eta(G,i)}\ \ \  x_{i,G} \geq x_{j, G} - I_i \\
        & i \in V,\ G(V,E)\in \mathcal{T} \ \ \ \ \ I_j \in \{0, 1\} \ \ , 1 \geq x_{i, G} \geq 0\\
        & \sum_{i\in V\backslash \mathcal{I}}I_i \leq k
    \end{split}
\end{align}

where $\mu_G$ represents the probability of topology $G(V,E)$ being generated and $x_{i,G}$ accounts for the infection of $v_i$ in $G(V,E)$.

The most expensive part of this formulation is the enumeration of the $2^{|E|}$ topologies ($\mathcal{T}$). We reduce this complexity using sampling. But it comes at the cost of the quality of the result. We describe this sampling-based approach now.

\smallskip\noindent
\textbf{Sampling-based BLP}: Instead of enumerating all the topologies  $\mathcal{T}$, our idea relies on sampling a subset of them and computing the nodes for vaccination based on these samples. This will provide us an estimate of the total error. The formulation is similar to that of Equation~\ref{eqn:exponetital-ilp}, except for the enumeration of the topologies. We replace the set of all topologies $\mathcal{T}$ with the sampled topologies $\mathcal{U}$. The final formulation is shown below.

\begin{align}\label{eqn:sampled-ilp}
    \begin{split}
        minimize &\sum_{G(V,E)\in \mathcal{U}}\ \ \sum_{i\in V}\ \ \mu_G\ x_{i,G}\\
        &  i\in V,\ G(V,E)\in \mathcal{U}\ \  \forall\ {j=\eta(G,i)}\ \ \  x_{i,G} \geq x_{j, G} - I_i \\
        & i \in V,\ G(V,E)\in \mathcal{U} \ \ \ \ \ I_j \in \{0, 1\} \ \ , 1 \geq x_{i, G} \geq 0 \\
        & \sum_{j\in V\backslash \mathcal{I}} I_j\leq k
    \end{split}
\end{align}

The above approach provides us the optimal set of $k$ nodes that need to be vaccinated, given the $|\mathcal{U}|=s$ topologies. However, this procedure may not be time effective. Thus, we propose a relaxation of the Binary constraints to take non-integer values. To be more specific, the constraints on $I_j$ are relaxed such that $I_j$ can take values between $0$ and $1$ (i.e. $I_j \in [0, 1]$). The relaxed LP is presented below.

\begin{align}\label{eqn:sampled-lp-relaxed}
    \begin{split}
        minimize &\sum_{G(V,E)\in \mathcal{U}}\ \ \sum_{i\in V}\ \ \mu_G\ x_{i,G}\\
        &  i\in V,\ G(V,E)\in \mathcal{U}\ \  \forall\ {j=\eta(G,i)}\ \ \  x_{i,G} \geq x_{j, G} - I_i \\
        & i \in V,\ G(V,E)\in \mathcal{U} \ \ \ \ \ 0 \leq I_j \leq 1 \ \ , 1 \geq x_{i, G} \geq 0 \\
        & \sum_{j\in V\backslash \mathcal{I}} I_j\leq k
    \end{split}
\end{align}

While the relaxation does not provide an exact result, it provides us a lower bound on the number of infected nodes. But as the nodes that would be vaccinated can take fractional values, we would need a rounding procedure to obtain a valid set of nodes to vaccinate. To obtain valid solutions, we describe two rounding procedures, top-$k$ rounding (TKR) and an iterative rounding procedure (IRP). 

\smallskip\noindent
\textbf{Top-$k$ Rounding procedure} (\texttt{TKR}): Our first rounding approach is a deterministic procedure that considers only the value for the vaccination variables ($I_j$). As the vaccines that have received a high score for vaccination may contribute significantly for the spread of the disease, we use the vaccination score as a key indicator. Among all the nodes that have received a non-zero score for vaccination, we pick those nodes that have the top $k$ values.

\smallskip\noindent
\textbf{Iterative Rounding procedure} (\texttt{IRP}): An alternative rounding procedure is an iterative one. Again, we use the score of $I_j$ as an indicator for the nodes to vaccinate. Instead of $k$ vaccinating nodes at once, we propose an iterative procedure that vaccinates one node at a time. We start with the relaxed LP with no node being vaccinated. In the first iteration of the relaxed LP, we select the node with the highest score $I_j$ to be vaccinated. We vaccinate the chosen node $c$ in the next iteration by setting the score of $I_c$ to $1$ by adding the constraint $I_c=1$ to the LP. We then rerun the LP and select the next node with the highest vaccination score. This process is continued iteratively until $k$ nodes are selected.

We perform extensive experimental evaluation (Section~\ref{sec:experiments}) to assess the performance of the relaxed LP with \texttt{TKR} and \texttt{IRP} rounding procedures.
%
%
%
The runtime analysis of the relaxed Linear Program with the two rounding procedures - \texttt{TKR} and \texttt{IRP} are presented in Theorem~\ref{thm:runtime-lp-relaxation-tkr} and Theorem~\ref{thm:runtime-lp-relaxation-irp}, respectively.

\begin{theorem}\label{thm:runtime-lp-relaxation-tkr}
    The relaxed LP with \texttt{TKR} rounding procedure consumes a total of $\mathcal{O}((m + n)^{1.5}\  m s^{2.5}\ L)$ amount of time, where $s$ is the number of topologies sampled and $L$ is the number of bits needed to represent the input.
\end{theorem}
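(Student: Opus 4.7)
The plan is to account separately for the cost of solving the relaxed LP and the cost of the \texttt{TKR} rounding step, and then observe that the former dominates.

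First, I would enumerate the size of the LP in~(\ref{eqn:sampled-lp-relaxed}). The variables are $x_{i,G}$ for every node $i\in V$ and every sampled topology $G\in\mathcal{U}$, together with the vaccination variables $I_j$ for $j\in V$. This yields $N=ns+n=\mathcal{O}(ns)$ variables. The constraints split into three groups: the propagation constraints $x_{i,G}\ge x_{j,G}-I_i$, contributing $\mathcal{O}(m)$ inequalities per sampled topology (using that each sample has $\mathcal{O}(m)$ edges, as stated at the start of Section~\ref{sec:algorithms}); the box constraints on $x_{i,G}$ and $I_j$; and the single budget constraint. Together this gives $M=\mathcal{O}((m+n)s)$ constraints. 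Counting non-zeros in the constraint matrix, each propagation inequality contributes $\mathcal{O}(1)$ non-zeros, each box constraint contributes $\mathcal{O}(1)$, and the budget contributes $\mathcal{O}(n)$, yielding a total of $\mathrm{nnz}=\mathcal{O}(ms+n)=\mathcal{O}(ms)$ since $s\ge 1$.

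Next I would invoke a standard interior-point LP solver with complexity of the form $\mathcal{O}\!\bigl((N+M)^{1.5}\cdot\mathrm{nnz}\cdot L\bigr)$, where $L$ is the bit-length of the input. Substituting the counts above,
\begin{align*}
(N+M)^{1.5}\cdot\mathrm{nnz}\cdot L
&=\mathcal{O}\!\bigl(((m+n)s)^{1.5}\cdot ms\cdot L\bigr)\\
&=\mathcal{O}\!\bigl((m+n)^{1.5}\,m\,s^{2.5}\,L\bigr),
\end{align*}
which matches the claimed bound. Finally, the \texttt{TKR} rounding step reads the fractional values $I_j$ produced by the LP, sorts them, and picks the $k$ largest; this takes $\mathcal{O}(n\log n)$ time and is absorbed into the LP-solving cost.

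The main obstacle, and the only non-mechanical aspect of the argument, is choosing the LP-solver complexity bound so that the three quantities $(N+M)^{1.5}$, $\mathrm{nnz}$, and $L$ combine into precisely the stated expression; once the right bound is cited the counting is a direct substitution. A minor point to be careful about is confirming that the number of propagation constraints across all samples is truly $\mathcal{O}(ms)$ and not larger (e.g.\ that each topology is stored with at most $\mathcal{O}(m)$ edges, as guaranteed by the sampling procedure in Section~\ref{subsec:sampling-approach}); otherwise the $m$ factor in the final bound would need to be replaced by a looser quantity.
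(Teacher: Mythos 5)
Your proposal is correct and follows essentially the same route as the paper's proof: count the $\mathcal{O}(ns)$ variables and $\mathcal{O}(ms+n)$ constraints of the sampled relaxed LP, plug these into a polynomial LP-solver bound (the paper cites Vaidya's $\mathcal{O}((N+D)^{1.5}NL)$, which your $(N+M)^{1.5}\cdot\mathrm{nnz}\cdot L$ form reproduces here since each propagation constraint has $\mathcal{O}(1)$ non-zeros), and observe that the $\mathcal{O}(n\log n)$ sorting cost of \texttt{TKR} is dominated. The only cosmetic difference is which form of the interior-point complexity bound is invoked; both yield the stated $\mathcal{O}((m+n)^{1.5}\,m\,s^{2.5}\,L)$.
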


\begin{proof}
    Consider a graph $G$ with $n$ vertices and $m$ edges. In the LP given in Equation~\ref{eqn:sampled-lp-relaxed}, there are a total of $n s$ variables that correspond to $x_{i,G}$ and $n$ variables that correspond to $I_j$. Thus, the total number of variables in the LP is $\mathcal{O}(n s)$. 
    
    The number of edges in the initial graph $G$ is $m$, and no more than $o(m)$ could be present in each sampled graph. Using this property, there are at most $m s$ equations in the graph that correspond to $x_{i, G}$.     Additionally, we add $2n$ constraints to restrict $I_j$ between $0$ and $1$. Hence, there are a total of $2n + m s$ number of equations in the LP.

    The total running time of an LP with $N$ equations and $D$ variables is $\mathcal{O}(N+D)^{1.5}NL$ (see Vaidya~\cite{vaidya1989speeding}) where $L$ is the number of bits needed to represent the input. Using this result, the running time of the $LP$ is $(m s+2 n + n s)^{1.5}\ m sL$. Simplifying, we get the overall complexity to be $\mathcal{O}((m + n)^{1.5}\ m s^{2.5}\  L)$. 
    
    The rounding procedure sorts the values of $I_j$ from the solution of the LP. Post that, the top $k$ elements are selected for vaccination. The rounding procedure consumes a total of $\mathcal{O}(n\log{(n}))$ time.

    The overall running time is dominated by the LP and thus, the total runtime complexity is $\mathcal{O}((m + n)^{1.5}\  m s^{2.5}\ L)$.
\end{proof}

\begin{theorem}\label{thm:runtime-lp-relaxation-irp}
    The relaxed LP with \texttt{IRP} rounding procedure consumes a total of $\mathcal{O}(k(m + n)^{1.5}\  m s^{2.5}\ L)$ amount of time, where $s$ is the number of topologies sampled and $L$ is the number of bits needed to represent the input.
\end{theorem}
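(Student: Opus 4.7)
The plan is to build directly on Theorem~\ref{thm:runtime-lp-relaxation-tkr}: the only structural difference between the \texttt{TKR} and \texttt{IRP} procedures is that \texttt{IRP} solves the relaxed LP from Equation~\ref{eqn:sampled-lp-relaxed} a total of $k$ times instead of once, adding a single equality constraint of the form $I_c = 1$ between consecutive solves. So the proof will reduce to a multiplicative bookkeeping argument on top of the per-solve bound already established.

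First, I would recall from the proof of Theorem~\ref{thm:runtime-lp-relaxation-tkr} that each invocation of the relaxed LP has $\mathcal{O}(ns)$ variables and $\mathcal{O}(ms + n)$ constraints, leading to a per-solve cost of $\mathcal{O}((m+n)^{1.5}\, m s^{2.5}\, L)$ via Vaidya's bound~\cite{vaidya1989speeding}. Next, I would observe that at iteration $t \le k$ of \texttt{IRP} the LP has been augmented by at most $t - 1 \le k - 1 \le n$ equality constraints of the form $I_{c} = 1$; these extra constraints keep the constraint count at $\mathcal{O}(ms + n)$ and leave the variable count untouched, so the per-solve bound is preserved across all iterations.

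After each LP solve, \texttt{IRP} picks the vertex with the largest $I_j$ value among those not yet fixed; a single linear scan over the $n$ coordinates costs $\mathcal{O}(n)$, which is absorbed by the LP cost. Summing over the $k$ iterations gives a total runtime of $\mathcal{O}(k(m+n)^{1.5}\, m s^{2.5}\, L)$, which is the claimed bound.

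The only mildly delicate point is making sure that fixing variables to $1$ between solves is charged as a constant-size constraint addition rather than as a warm-started re-solve with savings; I would simply invoke the black-box Vaidya bound for each of the $k$ LPs independently, which is conservative but matches the statement of the theorem. No nontrivial combinatorial argument is required beyond this multiplicative accounting.
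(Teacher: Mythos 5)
Your proposal is correct and follows essentially the same route as the paper: the paper's (sketch) proof simply multiplies the per-solve bound from Theorem~\ref{thm:runtime-lp-relaxation-tkr} by the $k$ LP invocations of \texttt{IRP}. Your added observation that the accumulated $I_c = 1$ equality constraints do not change the asymptotic constraint count is a small but worthwhile detail the paper leaves implicit.
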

\begin{proof} 
    (\emph{Sketch}) We use the runtime result of the LP from the proof of Theorem~\ref{thm:runtime-lp-relaxation-irp}. In the iterative rounding procedure, the LP is run $k$ times to identify the $k$ vertices that will be vaccinated. Thus, the overall running time is $k$ times that of \texttt{TKR}. Hence, the runtime is $\mathcal{O}(k(m + n)^{1.5}\  m s^{2.5}\ L)$.
\end{proof}

Note that in the above two theorems, we could compute the expected running time of our approach based on the expected number of constraints that would be added to the LP. As each edge (between $v_i$ and $v_j$) has a probability $p$ to be sampled in a topology, we could estimate the number of constraints by computing the expected number of edges present in the $s$. For each edge, the contribution towards the constraints in one topology is $m'=\sum_{(v_i, v_j)\in E}\ p_{ij}$. Thus, we can compute the expected running time of our approaches by replacing $m$ with $m'$ in the above formulae.

\subsection{Greedy Strategy}\label{subsec:greedy}

\begin{algorithm}[t]
\small{
    \caption{Greedy}\label{algo:greedy}
    \hspace*{\algorithmicindent} \textbf{Input} : Graph $G(V, E)$, topologies $T$, budget $k$, infected $\mathcal{I}$ \ \ \ \ \ \ \ \ \ \ \\
    \hspace*{\algorithmicindent} \textbf{Output} : Set of vertices $S$ to be vaccinated \ \ \ \ \ \ \ \ \ \ \ \ \ \ \ \ \ \ \ \ \ \ \ \ \ \ \ \ \ \ \ \ \ \ \ \ \ \ 
    \begin{algorithmic}[1]
        \State $S \leftarrow \emptyset$
        \For{$i \leftarrow\ 0$ to $k$}
            \For{$v \in V$}
                \State $saved[v]  \leftarrow 0$
                \For{$X \in T$}
                    \State $saved[v] \leftarrow saved[v] + \sigma_X(S\cup \{v\}, \mathcal{I})-\sigma_X(S, \mathcal{I})$
                \EndFor
            \EndFor
            \State Search the vertex $v$ which has the highest score in $saved$
            \State $S\leftarrow S\cup \{v\}$
        \EndFor
        \Return $S$
    \end{algorithmic}
}
\end{algorithm}

We now propose a greedy heuristic that is inspired by the nature of the optimization. The optimization problem of finding $k$ vertices that need to be vaccinated can be performed in $k$ steps. 

Our approach is a simple greedy approach where we start the iterative process with an empty set (i.e. $\mathcal{I}=\emptyset$). The different topologies are sampled from the graph $G(V, E)$ to estimate $\mathcal{I}$. In the first iteration, we add one vertex to the set such that it maximises the sum of vertices saved across all $s$ topologies. If two vertices pose a similar potential to save vertices, we break ties arbitrarily. The iterative process is continued until we add $k$ vertices to the graph. The pseudocode of our approach is presented in Algorithm~\ref{algo:greedy}.

As the optimization function is neither sub-modular nor super-modular (see Lemma~\ref{thm:negative-sub-super-modular}), the well known result from Nemhauser et. al~\cite{nemhauser1978analysis} is not applicable. Nevertheless, we show that Greedy works well in practice with extensive experimental evaluations. 

The runtime analysis of our approach is analysed in Theorem~\ref{thm:greedy-runtime}.

\begin{theorem}\label{thm:greedy-runtime}
    Given a set $s$ of topologies, Greedy Algorithm~\ref{algo:greedy} runs in time $\mathcal{O}(kns(n+m))$ time.
\end{theorem}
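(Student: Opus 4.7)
The plan is to analyze the nested loop structure of Algorithm~\ref{algo:greedy} and bound the cost of each innermost operation, then multiply out. The only nontrivial operation inside the loops is the computation of the marginal gain $\sigma_X(S \cup \{v\}, \mathcal{I}) - \sigma_X(S, \mathcal{I})$ for a fixed topology $X$; everything else (comparisons, updates to the $\mathit{saved}$ array, and the final argmax) is cheap in comparison.

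First I would observe that on a single sampled topology $X$, since the sampled graph is unweighted (as discussed in Section~\ref{subsec:sampling-approach}), the quantity $\sigma_X(S, \mathcal{I})$ equals the number of nodes reachable from the set $\mathcal{I}$ in the vertex-induced subgraph obtained by deleting $S$. This can be computed by a single BFS or DFS from the source set $\mathcal{I}$ in $\mathcal{O}(n + m)$ time, since each sampled topology has $\mathcal{O}(m)$ edges (as explicitly enforced in the opening paragraph of Section~\ref{sec:algorithms}). Evaluating $\sigma_X(S \cup \{v\}, \mathcal{I})$ likewise takes $\mathcal{O}(n+m)$ time, so the marginal gain on one topology costs $\mathcal{O}(n+m)$.

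Next I would account for the loops. The outer \textbf{for} loop runs $k$ times (once per vaccine in the budget). Within each outer iteration, the loop over $v \in V$ contributes a factor of $n$, and the loop over $X \in T$ contributes a factor of $s$. Multiplying the per-topology cost $\mathcal{O}(n+m)$ through these loops gives $\mathcal{O}(k \cdot n \cdot s \cdot (n+m))$. The argmax over the $\mathit{saved}$ array on line~8 takes only $\mathcal{O}(n)$ per outer iteration and is absorbed by the dominant term.

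The only mild subtlety, which I would flag briefly, is justifying that each sampled topology truly has $\mathcal{O}(m)$ edges: for the IC model this follows because edges are retained from $E$, and for the LT model this follows from the live-edge construction of Kempe et al.~\cite{kempe2003maximizing} which selects at most one incoming edge per vertex, yielding at most $n \le m$ (under connectivity assumptions, or otherwise this only improves the bound). This is the step one must be careful with; once it is in hand, the total bound $\mathcal{O}(kns(n+m))$ follows by simple arithmetic.
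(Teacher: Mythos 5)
Your proposal is correct and follows essentially the same route as the paper: a BFS/DFS evaluation of the spread costing $\mathcal{O}(n+m)$ per candidate vertex per topology, multiplied across the $k$ outer iterations, $n$ candidates, and $s$ samples. The extra remark about each sampled topology having $\mathcal{O}(m)$ edges is a detail the paper handles by fiat in the opening paragraph of Section~\ref{sec:algorithms}, so nothing substantive differs.
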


\begin{proof}
    (\emph{Sketch}) We analyse the greedy approach over one topology and then extend our analysis to the different topologies. 
    
    Consider the selection of the first vertex for vaccination. As any of the $n$ vertices could potentially be a candidate, we need to compute its contribution towards $\sigma$. Thus, one simple approach is to run a BFS or DFS to compute the spread of the disease when vertex $v_i$ is vaccinated. As we measure the potential of each vertex $v_i$ in $\mathcal{O}(n+m)$ time, the overall time needed to compute the first vertex to be vaccinated is $\mathcal{O}(n(n+m))$.

    As the same procedure needs to be repeated $k$ times across $s$ samples, the overall time is $\mathcal{O}(kns(n+m))$.
\end{proof}

\subsection{Local Search based approach}\label{subsec:local-search}

\begin{algorithm}[!t]
\small{
    \caption{Local Search}\label{algo:local-search}
    \hspace*{\algorithmicindent} \textbf{Input} : Graph $G(V, E)$, topologies $T$, vaccinated $S$, infected $\mathcal{I}$\ \ \ \ \ \ \ \ \ \ \ \ \ \ \ \ \ \\
    \hspace*{\algorithmicindent} \textbf{Output} : Set of vertices $S'$ to be vaccinated \ \ \ \ \ \ \ \ \ \ \ \ \ \ \ \ \ \ \ \ \ \ \ \ \ \ \ \ \ \ \ \ \ \ \ \ \ \ \ \ \ \ \ \ \ \ \ \ \ \ \ \ 
    \begin{algorithmic}[1]
        \State $S' \leftarrow S$; \ \ $converged\leftarrow$ \texttt{False}
        \While{not $converged$}
            \State $converged\leftarrow$ \texttt{True};  $temp \leftarrow S'$
            \For{$v \in S'$}
                \For{$v' \in \eta(v)$} \Comment{$\eta(v)$ gives the neighbors of $v$}
                    \If{$\sum_{X\in T} \sigma_X(temp, \mathcal{I}) < \sum_{X\in T} \sigma_X((S'\cup\{v'\})-\{v\}, \mathcal{I})$}
                        \State $converged\leftarrow$ \texttt{False}
                        \State $temp \leftarrow (S'\cup\{v'\})-\{v\}$
                    \EndIf
                \EndFor
            \EndFor
            \State $S'\leftarrow temp$
        \EndWhile
        \Return $S'$
    \end{algorithmic}
}
\end{algorithm}

A different heuristic that we try as a practical approach is based on the locality. Consider a set of $k$ vertices that form a feasible vaccination candidate. We propose a local search approach to modify the feasible solution to a better one in the locality. We define the notion of locality in definition~\ref{def:vaccination-locality} and then propose the heuristic.

\begin{definition}[neighbor]\label{def:vaccination-locality}
    Given a set $X$ of $k$ vertices that form a potential candidate for vaccination, a set of $k$ vertices $X'$ is termed as its neighbor if the two sets $X$, $X'$ differ by one vertex. Additionally, the vertex $v_i$ from $X$ that is replaced by $v_j$ in $X'$ should be neighbors. More concretely, there needs to be an edge between $v_j$ and $v_i$.   
\end{definition}

Based on the definition of a neighbor, we explore the search space using a local search approach. Our approach begins with a set of vertices $X$. These set of vertices could be chosen arbitrarily or from the output of another algorithm like the greedy approach. Note that $X$ and its neighbors are potential solutions to our optimization problem. Thus, our goal of the local search is to improve the optimization score in each iteration until convergence.

In each iteration, we find that neighbor $X'$ of $X$ which improves the optimization function the most. After updating our potential solution (initially $X$) to $X'$, our search continues in the neighborhood of $X'$. The local search procedure concludes the search when there are no neighbors which improves the optimization score. The pseudocode for the LS approach is presented in Algorithm~\ref{algo:local-search}.

We analyse the running time of each iteration of the local search approach in Theorem~\ref{thm:local-search-runtime}.

\begin{theorem}\label{thm:local-search-runtime}
    Given a set $X$ of vertices to be vaccinated, one step of the local search approach runs in an expected time of $\mathcal{O}(ks\frac{m}{n}(n+m))$. 
\end{theorem}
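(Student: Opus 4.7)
The plan is to bound the cost of a single iteration of the outer \texttt{while} loop in Algorithm~\ref{algo:local-search} by separately accounting for (a) the number of candidate swaps that are evaluated, and (b) the cost of each such evaluation, and then combining the two.

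First, I would analyze the cost of evaluating one candidate swap $(v,v')$. For a fixed sampled topology $X \in T$, the quantity $\sigma_X((S'\cup\{v'\})\setminus\{v\}, \mathcal{I})$ counts the nodes reachable from $\mathcal{I}$ in the subgraph obtained by removing the vaccinated set; this can be computed by a single BFS/DFS traversal in $\mathcal{O}(n+m)$ time, since by the construction in Section~\ref{subsec:sampling-approach} each sampled topology has $\mathcal{O}(m)$ edges. Summing over the $s$ topologies in $T$, the per-swap cost is $\mathcal{O}(s(n+m))$.

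Second, I would bound the expected number of candidate swaps processed in one iteration. The two nested \texttt{for} loops enumerate all pairs $(v,v')$ with $v\in S'$ and $v'\in \eta(v)$, so the total count equals $\sum_{v\in S'}\deg(v)$. Since $|S'|=k$ and the average degree in $G$ is $2m/n$, the expected value of this sum is $\mathcal{O}(km/n)$. Combining with the per-swap cost yields an expected runtime of
\begin{align*}
\mathcal{O}\!\left(\tfrac{km}{n}\right)\cdot \mathcal{O}(s(n+m)) \;=\; \mathcal{O}\!\left(ks\,\tfrac{m}{n}(n+m)\right),
\end{align*}
which matches the claimed bound.

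The main obstacle is justifying the ``expected'' qualifier. A worst-case $S'$ concentrated on high-degree hubs could inflate $\sum_{v\in S'}\deg(v)$ well beyond $km/n$. I would resolve this by appealing to the random geometric graph model introduced in Section~\ref{subsec:labels}: under that model the vertex degrees concentrate around the mean $2m/n$, so taking expectation over the graph (or equivalently over a uniformly selected $S'$) yields the stated bound. I would also briefly note that the bookkeeping to compute $\sigma_X$ from scratch for each candidate swap is the dominant cost, so incremental updates could only improve the bound; this is consistent with the sketch-level proof style used for Theorem~\ref{thm:greedy-runtime}.
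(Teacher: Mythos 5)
Your proposal is correct and follows essentially the same decomposition as the paper's proof: the neighborhood of $X$ has expected size $\mathcal{O}(k\frac{m}{n})$ via the average-degree argument, and each candidate swap costs $\mathcal{O}(s(n+m))$ via one BFS/DFS per sampled topology, giving the product bound. Your added remark about the ``expected'' qualifier (that a hub-concentrated $S'$ could exceed the average-degree bound) is a fair point that the paper glosses over, but it does not change the argument.
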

\begin{proof}
    We first compute the number of neighbors of $X$ and then for each neighbor compute the number of vertices that would be saved. Consider a set $X$ of $k$ vertices. In the graph $G(V, E)$, each vertex that belongs to $X$ in expectation may have $\mathcal{O}(\frac{m}{n})$ edges incident to it. As each of the $k$ vertices in $X$ could be replaced with a different vertex connected by an edge, the neighborhood could be at most $\mathcal{O}(k\frac{m}{n})$. Additionally, for each of these neighbors, a Breadth First Search (BFS), or Depth First Search (DFS) approach needs to be run over each topology to compute the extent of transmission. Hence, our approach consumes $\mathcal{O}(s(n+m))$ time for each neighbor. Thus, the expected running time is $\mathcal{O}(ks\frac{m}{n}(n+m))$.
\end{proof}

\section{Experiments}\label{sec:experiments}

\subsection{Experimental Setup}\label{subsec:experimental-setup}

We use two real-world datasets for these experiments. We generate synthetic datasets through popular graph generation techniques like the Waxman model~\cite{waxman1988routing} and the Erdős–Rényi model~\cite{erdos1959random}.  

We perform experiments on both the Linear Threshold and the Independent Cascade model. Note that the main difference between the two models is in the way weights or probabilities are assigned to the edges. Therefore, we follow the same technique to generate the networks and assign edge weights (or probabilities) in the following manner for the two models:
\begin{enumerate}[leftmargin=*]
    \item \textit{Linear Threshold model}: The edges are assigned weights randomly from a uniform distribution. To obey the constraints of the model, each edge weight is then rescaled so that the total weight of the incoming edges to each node is less than $1$.

    \item \textit{Independent Cascade model}: The edges are assigned probabilities randomly from a uniform distribution between $0$ and $1$. There is no correlation between any two edges' probabilities.
\end{enumerate}

\smallskip
\noindent\textbf{Datasets}:
\begin{itemize}[leftmargin=*]

    \item \textbf{Erdős–Rényi datasets}: We use the Erdős–Rényi graph generation technique \cite{erdos1959random} to generate random graphs with a fixed number of nodes. We assign edge weights/probabilities for each model using the methods described above.

    \item \textbf{Gaussian Waxman datasets}: To emulate a set of cities where populations are generally clustered around city centers, we randomly select some points from a 2-dimensional box to serve as the city centers. These centers act as the centers of a Gaussian distribution from which nodes for the graph are sampled. The variance for each Gaussian is sampled from a uniform random distribution and each city's population is proportional to its variance. This is done to model that cities with a larger spread generally have a higher population. After the specified number of nodes is sampled, we use the Waxman method \cite{waxman1988routing} to generate edges, in which an edge is created between two nodes with a probability $\alpha e^{-d / \beta L}$, where $d$ is the distance between the two nodes, $L$ is the maximum distance between any two nodes and $\alpha$ and $\beta$ are parameters. After generating the unweighted graph in this manner, edges are assigned weights using the same method as the Erdős–Rényi datasets.
    \item \textbf{New York and Texas datasets}: We obtain the datasets of the city-wise population and density of New York and Texas states from the dataset provided by \href{https://simplemaps.com/data/us-cities}{simplemaps.com}. The city-center coordinates, provided in the dataset, are then mapped to a rectangle of fixed length. The population sampling and graph generation are done using these coordinates as the city centers for the Gaussian Waxman method. We use the population density data along with the population data to get the variance of each city's Gaussian as a factor of $\sqrt{poulation / density}$. Since it is infeasible to have one node in the network for each person, the population is scaled down by some factor $f$, so that one node represents $f$ people. The edge sampling and weight assignments are done using the method used in the Gaussian Waxman networks.
\end{itemize}

\subsection{Algorithms and Implementation}

The graph generation and edge-weight assignment for both models are done using Python scripts.
The greedy, local search and hill-climbing algorithms are implemented in C++. The LP and ILP are implemented in Python using the \href{https://pypi.org/project/gurobipy/}{\texttt{gurobipy}}~\cite{gurobiCite} package. The programs are run on a machine with the 12th Gen Intel(R) Core(TM) i7-12700K processor with 12 cores and 20 CPUs. The code can be accessed at \href{https://anonymous.4open.science/r/DiseaseSpreading-4228}{this link}.
\par
For each network, the topology samples are generated once and the same samples are used across all programs. Doing so allows us to compare the performance of each method for a certain network and set of samples. The linear search and hill-climbing algorithms use the greedy solution as the initial solution and improve on it.

As a baseline, we implement a Hill Climbing (HC) based approach. Similar to our Local Search, HC starts from a set of vertices to be vaccinated. In each iteration, HC tries to exchange one vertex of the solution set by a different one. Among all the options that HC has for an exchange, the one that improves the optimization function the most is chosen for the next iteration. In any iteration, if none of the exchanges results in a better/improved solution, then the approach has converged, and we report the solution that has been found. Note that the HC approach finds the local optimum under the exchange operation.

\subsection{Experimental Results for Linear Threshold}

\subsubsection{Quality results for Gaussian Waxman graphs}\label{subsubsection:quality-waxman}

In this experiment, we measure the quality of the results for the different methods proposed in this paper. We generated graphs with 64, 128, 256, and 512 vertices using the Gaussian Waxman generation method for 5 centers with 10\% of the nodes initially infected which were chosen randomly, and the budget of vaccines being 10\% of the total population. For each of these graphs, we sampled 50 topologies.

We measure the quality of the different algorithms and report in Table~\ref{table:avg-nodes-saved-Gaussian-Waxman}. Note that the BLP finds the global optimum for the given set of topologies. Thus, in this experiment, we compare the quality against that of BLP. As mentioned before, both Local Search (LS) and Hill Climbing (HC) were initialized with the output of the greedy algorithm. Thus, one could see that they always outperform the greedy approach. But, as can be seen in these experiments, Greedy finds a solution very near to the optimal solution in many of these settings. When the number of nodes increases, the number of nodes saved by Greedy slightly drops, but the solution is still very near (off by around 2 vertices) to that of the BLP algorithm.

\texttt{LP} with \texttt{TKR} also performs reasonably well in these experiments. As the number of nodes increases, Greedy performs marginally better than \texttt{LP}-\texttt{TKR}. Both HC and LS improve the solution of Greedy by slight margins. Neither succeeds in finding the global optimum for the larger settings. Additionally, HC performs slightly better than LS, due to the larger search space of HC.  \texttt{LP} with \texttt{IRP} performs well compared \texttt{TKR} rounding and Greedy procedure. In some cases, it performs slightly better than LS and HC.

\begin{table}[t]
    \centering
    \resizebox{\linewidth}{!}{%
    \begin{tabular}{|c|c|c|c|c|c|c|}
        \hline
        \# Nodes & Greedy & LS & HC & BLP & LP - \texttt{TKR} & LP - \texttt{IRP} \\
        \hline
        64  & 54.34 & 54.34 & 54.34 & \textbf{54.34} & 54.34 & 54.34\\
        128   & 97.94 & 98.16 & 98.16 & \textbf{98.26} & 98.26 & 98.26 \\
        256  & 189.78  & 190.22 & 190.22 & \textbf{190.4} & 188.98 & 189.88\\
        512 & 369.12 & 369.7 & 369.8 & \textbf{370.28} & 365.18 & 369.62\\
        \hline
    \end{tabular}%
    }
    \caption{Average number of nodes saved by different algorithms for the Gaussian Waxman graph}\label{table:avg-nodes-saved-Gaussian-Waxman}
  
\end{table}

\subsubsection{Quality results for Erdős-Renyi graphs}\label{subsubsection:quality-renyi-erdos}

We perform a similar experiment as Section~\ref{subsubsection:quality-waxman} but with Erdős–Rényi graphs. For this experiment, we generated Erdős–Rényi graphs with 128, 256, and 512 nodes with around 10\% of the nodes initially infected. The infected nodes were chosen randomly, and the budget of vaccines was set to 10\% of the total population. Similar to our previous experiment, for each of the graphs, we sampled 50 topologies.

The results for this experiment are presented in Table~\ref{table:avg-nodes-saved-ER}. Similar to our previous experiment, the quality of Greedy is comparable to the optimal solution of BLP. LP relaxation with TKR performs close to Greedy but slightly worse compared to Greedy. LS and HC improve the solution of Greedy but very small (almost negligible) margins. \texttt{LP}  with \texttt{IRP} works better than the \texttt{TKR} procedure. But the Greedy performs better than \texttt{LP}-\texttt{IRP} for smaller graph sizes. At 512 nodes, \texttt{LP}-\texttt{IRP} performs marginally better than Greedy.

\begin{table}[t]
    \centering
    \resizebox{\linewidth}{!}{%
    \begin{tabular}{|c|c|c|c|c|c|c|}
        \hline
        \# Nodes & Greedy & LS & HC & ILP & LP - \texttt{TKR} & LP - \texttt{IRP}\\
        \hline
        128 & 94.3 & 95.02 & 95.26 & \textbf{95.26} & 94.96 & 95.2\\
        256  & 184.34  & 184.34 & 184.34 & \textbf{184.44} & 182.26 & 184.02\\
        512 & 361.4 & 362.62 & 362.92 & \textbf{363.12} & 358.12 & 362.14\\
        \hline
    \end{tabular}%
    }
    \caption{Average number of nodes saved by different algorithms ER graphs}\label{table:avg-nodes-saved-ER}

\end{table}

\subsubsection{Varying the \% budget of vaccines} \label{subsec:varying-budget-vaccines}
We explore the effect of varying the number of vaccines and observing the number of vertices that are saved from the infection. For the dataset, we generated Gaussian Waxman graphs with 512 nodes. We set 10\% of the vertices to be initially infected and fixed the number of topologies sampled to 50.

We vary the number of vaccinated vertices to 5\% (25 vertices), 10\% (51 vertices), 20\% (102 vertices), 30\% (138 vertices), 40\% (204 vertices), and 50\% (256 vertices) of the total population.

Our results are presented in Tables~\ref{table:nodes-saved-budget-vaccines-varied} and ~\ref{table:nodes-saved-budget-vaccines-varied-2}. For the higher vaccine budgets (30\%, 40\%, and 50\%), we skip the hill-climbing algorithm and ILP since they consume too much time. Moreover, hill-climbing provides negligible improvement over the greedy solution. 

An interesting observation is that when the number of vaccines increases from 5\% to 10\% (by around 5\% which is roughly 25 vaccines) the total number of people saved by the BLP increases by around 40, which is a significant increase in the number of people saved. But as the number of vaccines increases by another 10\% we do not see a significant jump in the number of people saved. By increasing the budget from 40\% to 50\%, we see an even smaller increase in the number of people saved for the other approaches. This shows a similar behaviour to the \emph{law of diminishing returns}, even though the function is not sub-modular for the general class of graphs. 

\begin{table}[t]
    \centering
    \resizebox{0.95\linewidth}{!}{%
    \begin{tabular}{|c|c|c|c|c|c|c|}
        \hline
        Budget \% & Greedy & LS & HC & ILP & LP - \texttt{TKR} & LP - \texttt{IRP}\\
        \hline
        5 & 327.26 & 327.26 & 328.02 & \textbf{328.32} & 326.6 & 328.32\\
        10  & 369.12 & 369.7 & 369.8 & \textbf{370.28} & 365.18 & 369.62\\
        20 & 409.96 & 410.38 & 410.54 & \textbf{411.08} & 408.32 & 410.34\\
        \hline
    \end{tabular}%
    }
    \caption{Average number of nodes saved when varying the vaccination budgets}\label{table:nodes-saved-budget-vaccines-varied}
 
\end{table}

\begin{table}[t]
    \centering
    \resizebox{0.8\linewidth}{!}{%
    \begin{tabular}{|c|c|c|c|c|}
        \hline
        Budget \% & Greedy & LS & LP - \texttt{TKR} & LP - \texttt{IRP}\\
        \hline
      
        30 & 429.56 & 430.74  & 429.7 & 430.96\\
        40 & 443.02 & 444.3  & 444.26 & 444.34\\
        50 & 451.02 & 451.52  & 451.58 & 451.58\\
        \hline
    \end{tabular}%
    }
    \caption{Average number of nodes saved when varying the vaccination budgets for higher values}\label{table:nodes-saved-budget-vaccines-varied-2}
    
\end{table}

\subsubsection{Runtime results for Gaussian Waxman graphs}\label{subsec:runtime-waxman}

In this experiment, we measure the runtime of the different algorithms for the Gaussian Waxman model. In this experiment, the number of vertices is varied with around 10\% of the vertices initially infected with the disease with a budget of 10\% vaccines. When we consider the time consumed by the different approaches, we find a different picture from the quality experiments seen so far. 

The results of this experiment are presented in Figure~\ref{fig:time-plot-Waxman}. Surprisingly, the most efficient algorithm is the \texttt{LP} relaxation with the TKR rounding scheme. While the worst case complexity of \texttt{LP} relaxation seems to be worse than the \texttt{Greedy}, the \emph{Simplex} approach seems to find the best solution faster than the worst case complexity. The time needed for \texttt{LP}-\texttt{TKR} is at least four orders of magnitude faster than the closest competitors - \texttt{Greedy} and \texttt{LS}. Greedy and LS consume a similar amount of time. While Greedy consumes a fixed amount of time, LS is an iterative algorithm. In this scenario, LS and Greedy have consumed a similar amount of time. In some other experiments, the time varies based on the number of iterations until convergence. \texttt{LP}-\texttt{IRP} is efficient compared to the Greedy. LS and HC approaches by more than a few orders of magnitude. But, \texttt{LP}-\texttt{IRP} is slower than \texttt{LP}-\texttt{TKR} by one order of magnitude. Combining the quality results with these approaches, rounding based approaches seem to provide a good trade-off especially, compared to Greedy, LS, and HC. Moreover, the results of these rounding procedures are comparable to that of the competitors. 

The baseline approach of HC consumes around two orders of magnitude more time compared to Greedy and HC when the number of vertices is 512. Additionally, the improvements of HC over Greedy are negligible, as seen in the previous experiments. For some of the further experiments that look at other parameters, we drop HC from the plots due to the excessive amount of time taken to run the algorithm.

\begin{figure}[t]
  \centering
  \includegraphics[width=0.8\columnwidth]{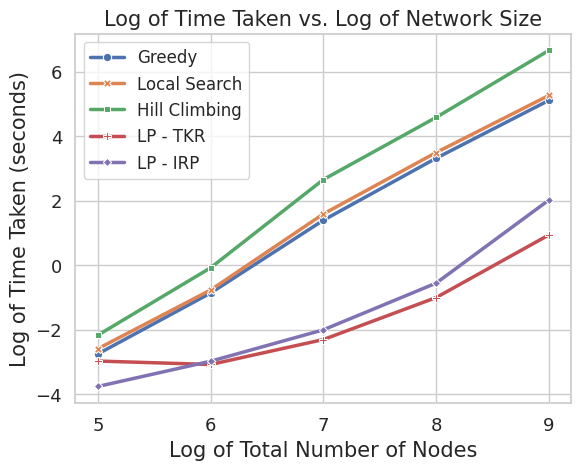}

  \caption{Log (time) vs log (no. of nodes) for Greedy, LS, HC, and LP with rounding for the Gaussian Waxman model}
  \label{fig:time-plot-Waxman}

\end{figure}

\subsubsection{Varying the Number of Topologies}: We explore the effect of varying the number of sampled topologies on the robustness of the result. For the dataset, we generated Gaussian Waxman graphs with 256 and 512 nodes in which 10\% of the nodes were initially infected and the vaccine budget was 10\%. For each network size, the same network was used for all cases to ensure a valid comparison. We sampled 25, 50, 100, 150, 200, 250, and 300 topologies and checked the variance of the number of nodes saved for 10 executions of each algorithm. We skipped ILP and HC due to their large runtimes.

Figures \ref{fig:time-bp_256_LT} and \ref{fig:time-bp_512_LT} show the box plots of the nodes saved for 10 executions of each algorithm, for 256 and 512 nodes respectively. We observe that in general, the average number of nodes saved decreases, which indicates that smaller sample sizes overestimate the effectiveness of a certain set of nodes as vaccines, possibly due to the high variance of the data. As the number of topologies increases, the number of saved nodes decreases but the estimate of the number of saved nodes becomes more reliable.

\begin{figure}[t]
  \centering
  \includegraphics[width=0.9\columnwidth]{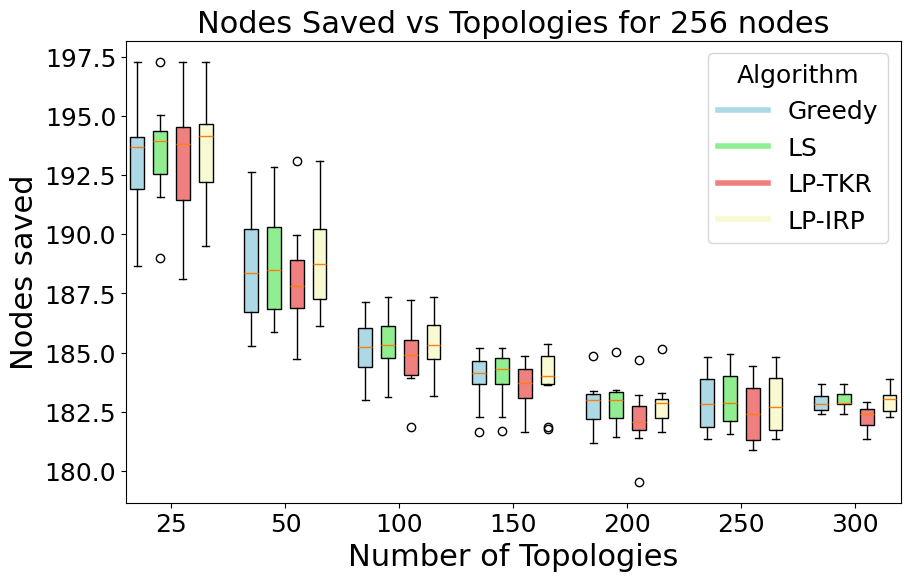}
 
  \caption{For 256 nodes, beyond threshold 200 topologies, the variance of nodes saved reduced}
  \label{fig:time-bp_256_LT}

\end{figure}  

\begin{figure}[ht]
  \centering
   \includegraphics[width=0.9\columnwidth]{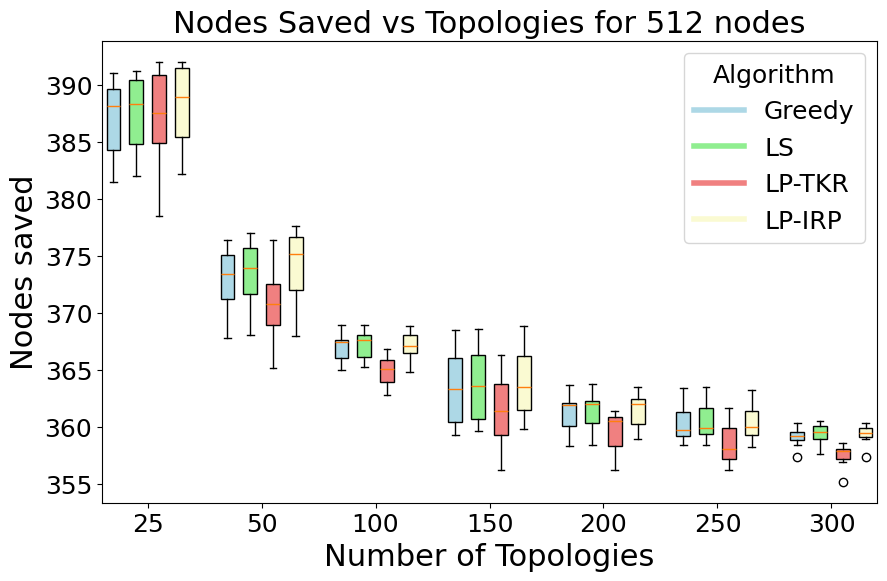}

  \caption{For 512 nodes, beyond threshold 200 topologies, the variance of nodes saved reduced}
  \label{fig:time-bp_512_LT}

\end{figure}

\subsubsection{Quality and runtime Results for Texas and NY Datasets} \label{subsec:quality-runtime-Texas-NY}

In this experiment, we use the Texas dataset which has a total population of 34327175, and create a miniature version for our experiments. We scale the Texas dataset by representing a group of people by a single vertex in a graph. This is akin to the notion of resolution of images, where one could look at the same picture at different resolutions. 
We use two scales, one to have 1 node represent 100000 people (resulting in a graph with 238 vertices) and another to have 1 node represent 50000 people (resulting in a graph with 517 vertices). Towns with a very low population are not used for our modeling. In these miniature graphs, we fix the set of infected nodes and the vaccine budgets at around 10\% of the total vertices and set a sampling budget of 50 topologies.

The results for this experiment are presented in Table~\ref{tab:percentage-saved-Texas}. An interesting observation is that the percentage of vertices saved is similar across the different scaling factors. This clearly shows that our models capture the essence of the disease-spreading mechanism even when the number of nodes is scaled to very low factors, i.e. even at low resolutions. Similar results can also be observed for the NY dataset, the results for which are presented in Table~\ref{tab:percentage-saved-NY}. 

We also measure the time taken by the different methods in the table. The results show that \texttt{LP-TKR} provides a good trade-off between quality and time taken. As seen in the previous sections, Greedy performs slightly better compared to LP relaxation at the cost of runtime. Moreover, \texttt{LP-IRP} performs better in these experiments \texttt{LP-TKR} with slightly more amount of time. Thus, we consider \texttt{LP-TKR} and \texttt{LP-IRP} to be used for practical purposes.

\begin{table}[t]
    \centering
    \resizebox{\linewidth}{!}{%
    \begin{tabular}{|c|cc|cc|cc|cc|}
        \hline
        \multirow{2}{*}{Scaling Factor} & \multicolumn{2}{c|}{Greedy} & \multicolumn{2}{c|}{ILP} & \multicolumn{2}{c|}{LP-TKR} & \multicolumn{2}{c|}{LP-\texttt{IRP}} \\
        & \% Saved & Time (s) & \% Saved & Time (s) & \% Saved & Time (s) & \% Saved & Time (s) \\
        \hline
        100000 & 74.81 & 14.3039 & \textbf{75.25} & 0.4299 & 74.99 & 0.2621 & 75.25 & 0.3789 \\
        50000 & 72.55 & 167.808 & \textbf{72.75} & 758.3711 & 72.09 & 0.9015 & 
72.60 & 2.6465\\
        \hline
    \end{tabular}%
    }
    \caption{Percentage of nodes saved when varying the scale factor for the Texas dataset}\label{tab:percentage-saved-Texas}
\end{table}

\begin{table}[t]
    \centering
    \resizebox{\linewidth}{!}{%
    \begin{tabular}{|c|cc|cc|cc|cc|}
        \hline
        \multirow{2}{*}{Scaling Factor} & \multicolumn{2}{c|}{Greedy} & \multicolumn{2}{c|}{ILP} & \multicolumn{2}{c|}{LP-\texttt{TKR}} & \multicolumn{2}{c|}{LP-\texttt{IRP}} \\
        & \% Saved & Time (s) & \% Saved & Time (s) & \% Saved & Time (s) & \% Saved & Time (s) \\
        \hline
        100000 & 73.38 & 32.2715 & \textbf{73.67} & 3.0190 & 72.85 & 0.4858 & 73.66 & 0.8105 \\
        50000 & 73.51 & 297.648 & \textbf{73.72} & 1097.400 & 72.66 & 1.0876 & 73.17 & 3.0699\\
        \hline
    \end{tabular}%
    }
    \caption{Percentage of nodes saved when varying the scale factor for the New York (state) dataset}\label{tab:percentage-saved-NY}

\end{table}

\subsection{Experimental Results for Independent Cascade}

We perform a similar set of experiments to the linear threshold models on independent cascade models. Unlike the linear threshold samples, the independent cascade samples can have more than one incoming edge per node. Therefore, keeping the same values of parameters $\alpha$ and $\beta$ in the Gaussian Waxman graphs results in overly dense samples. To counter this, we change the graph generation parameters to $\alpha = 0.05$ and $\beta = 0.5$ for all experiments.

\subsubsection{Quality results for Gaussian Waxman graphs}\label{subsubsection:quality-waxman-ICE}

We measure the quality of the results for the different methods proposed in this paper for the Independent Cascade model. We generated graphs with 64, 128, 256, and 512 vertices using the Gaussian Waxman generation method for 5 centers with 10\% of the nodes initially infected which were chosen randomly, and the budget of vaccines being 10\% of the total population. For each of these graphs, we sampled 50 topologies, i.e. $s=50$. 

We measure the quality of the different algorithms and report in Table~\ref{table:avg-nodes-saved-Gaussian-Waxman-ICE}. Since BLP finds the global optimum for the given set of topologies, we compare the quality of other approaches against BLP. However, we did not run BLP for 512 nodes due to its high runtime. Unlike Linear Threshold, the number of saved nodes doesn't increase with an increase in the total number of nodes, which means that the percentage of saved nodes decreases significantly with an increase in network size. This suggests that vaccination with 10\% of nodes is ineffective for the independent cascade model which can be attributed to its sampling technique. 

The \texttt{LP} programs with both rounding techniques outperform the greedy, local search and hill-climbing algorithms for smaller-sized networks, giving solutions close to the optimal; however, for larger networks, the opposite happens. Also, local search and hill-climbing improve the greedy solution by small margins.

\begin{table}[t]
    \centering
    \resizebox{\linewidth}{!}{%
    \begin{tabular}{|c|c|c|c|c|c|c|}
        \hline
        \# Nodes & Greedy & LS & HC & BLP & LP - \texttt{TKR} & LP - \texttt{IRP} \\
        \hline
        64  & 49.4 & 49.4 & 49.4 & \textbf{53.4} & 53.4 & 53.4\\
        128   & 68.96 & 70.64 & 74.76 & \textbf{87.5} & 81.68 & 87.5 \\
        256  & 57.68  & 57.68 & 59.3 & \textbf{59.64} & 44.92 & 46.34\\
        512 & 63.72 & 63.74 & 63.9 & \textbf{--} & 52.4 & 53.48\\
        \hline
    \end{tabular}%
    }
    \caption{Average number of nodes saved by different algorithms for the Gaussian Waxman graph for Independent Cascade}\label{table:avg-nodes-saved-Gaussian-Waxman-ICE}
\end{table}

\subsubsection{Runtime Results for Gaussian Waxman graphs}\label{subsec:runtime-gw-ice}

In this experiment, we measure the runtimes of different algorithms for the Waxman Gaussian graphs of Independent Cascade models. The number of vertices is varied, with 10\% of nodes initially infected and a budget of 10\% vaccines.

Figure \ref{fig:time-plot-Waxman-ICE} presents the results of this experiment. As with the experiment described in \S~\ref{subsec:runtime-waxman} for linear threshold, \texttt{LP} with \texttt{TKR} rounding is the most efficient algorithm. Greedy and LS search algorithms take a similar amount of time. Therefore, local search does not find much improvement over the greedy solution, just like the results obtained for linear threshold. \texttt{LP}-\texttt{IRP} runs faster than the greedy algorithm, but the greedy solution is closer to optimum in some cases, as seen in Table \ref{table:avg-nodes-saved-Gaussian-Waxman-ICE} for \S~\ref{subsec:runtime-waxman}. 

The hill-climbing algorithm consumes around two orders of magnitude more time than the greedy. Moreover, it does not provide much improvement over the greedy solution. We, therefore, do not consider HC for the subsequent experiments.
The BLP, which gives the optimum solution for a given set of topologies, has a very high runtime for larger graphs. For example, for a Waxman Gaussian graph with 256 nodes, the ILP takes 271586 seconds. In comparison, the greedy algorithm takes 63 seconds, and \texttt{LP}-\texttt{TKR} takes less than 1 second. We, therefore, focus on comparing the results obtained by our proposed approaches for further experiments.

\begin{figure}[t]
  \centering
  \includegraphics[width=0.8\columnwidth]{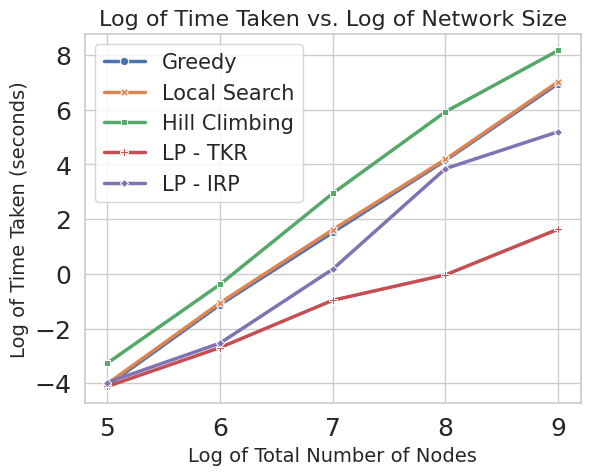}
  \caption{Log (time) vs log (no of nodes) for Greedy, LS, HC, LP - TKR and LP - IRP methods for the Gaussian Waxman graphs for Independent Cascade}
  \label{fig:time-plot-Waxman-ICE}
  
\end{figure}

\subsubsection{Varying the \% budget of vaccines}
We explore the effect of varying the vaccine budget for datasets of Gaussian Waxman graphs of 512 nodes for 10\% (51) initially infected nodes and 50 topologies. We vary the number of vaccinated vertices to 5\% (25 vertices), 10\% (51 vertices), 20\% (102 vertices), 30\% (138), 40\% (204 vertices), and 50\% (256 vertices). Since HC and BLP are too time-consuming for larger independent cascade graphs, we skip these two methods and analyze the number and percentage of nodes saved by the other approaches.

Our results are presented in Table \ref{table:nodes-saved-budget-vaccines-varied-IC}. An interesting observation is that for each approach, the jump in the number of nodes saved becomes higher with successive increases in the vaccine budget. This behavior is particularly pronounced for \texttt{LP} with \texttt{IRP}, for which increasing the budget from 40\% to 50\% (giving the vaccine to an additional 52 nodes) resulted in around 152 more nodes being saved. Therefore, the law of diminishing returns does not apply for the independent cascade case.

\begin{table}[t]
    \centering
    \resizebox{0.8\linewidth}{!}{%
    \begin{tabular}{|c|c|c|c|c|}
        \hline
        Budget \% & Greedy & LS & LP - \texttt{TKR} & LP - \texttt{IRP}\\
        \hline
        5 & 30.5 & 30.5  & 25.78 & 25.72\\
        10 & 63.72 & 63.74  & 52.4 & 53.48\\
        20 & 125.88 & 126.16  & 106.88 & 106.86\\
        30 & 192.34 & 194.58  & 158.82 & 158.88\\
        40 & 248.68 & 256.6  & 218.72 & 212.7\\
        50 & 340.28 & 358.34  & 310.76 & 364.84\\
        \hline
    \end{tabular}%
    }
    \caption{Average number of nodes saved when varying the vaccination budgets for independent cascade}\label{table:nodes-saved-budget-vaccines-varied-IC}
\end{table}

\subsubsection{Quality and Runtime Results for Texas and NY Datasets}\label{subsec:real-world-ice}
We measure the quality and runtime of our approaches on the New York and Texas datasets for the two scales described in \S~\ref{subsec:quality-runtime-Texas-NY} using the Independent Cascade model. Table \ref{tab:percentage-saved-NY-IC} presents the results for the NY dataset and Table \ref{tab:percentage-saved-Texas-IC} presents the results for the Texas dataset. 

Similar to \S~\ref{subsubsection:quality-waxman-ICE}, the percentages of saved nodes are much lower than the linear threshold counterparts. The percentages of saved nodes are similar across different scales for both datasets, indicating that even scaled versions of a real-world dataset can give insights into the effectiveness of a particular vaccine budget in preventing disease spread.

The tables also show the runtimes of different approaches. \texttt{LP}-\texttt{TKR} is highly efficient and gives results comparable to \texttt{LP}-\texttt{IRP}. Therefore, \texttt{LP}-\texttt{TKR} is a good approach for this setting. The greedy and LS algorithms take more time to run than the \texttt{LP} approaches and have a larger variation in the percentages of nodes saved across different scales than the \texttt{LP}s. Therefore, \texttt{LP}-\texttt{TKR} and \texttt{LP}-\texttt{IRP} are useful approaches for practical purposes.

\begin{table}[t]
    \centering
    \resizebox{\linewidth}{!}{%
    \begin{tabular}{|c|cc|cc|cc|cc|}
        \hline
        \multirow{2}{*}{Scaling Factor} & \multicolumn{2}{c|}{Greedy} & \multicolumn{2}{c|}{LS} & \multicolumn{2}{c|}{LP-TKR} & \multicolumn{2}{c|}{LP-\texttt{IRP}} \\
        & \% Saved & Time (s) & \% Saved & Time (s) & \% Saved & Time (s) & \% Saved & Time (s) \\
        \hline
        100000 & 13.6196 & 752.364 & 13.6196 & 783.9387 & 10.8655 & 15.5875 & 10.8655 & 112.245 \\
        50000 & 10.08 & 13123.9 & 10.08 & 13702.337 & 10.4864 & 61.5389 &  11.52 & 322.8992\\
        \hline
    \end{tabular}%
    }
    \caption{Percentage of nodes saved when varying the scale factor for the New York dataset}\label{tab:percentage-saved-NY-IC}
\end{table}

\begin{table}[t]
    \centering
    \resizebox{\linewidth}{!}{%
    \begin{tabular}{|c|cc|cc|cc|cc|}
        \hline
        \multirow{2}{*}{Scaling Factor} & \multicolumn{2}{c|}{Greedy} & \multicolumn{2}{c|}{LS} & \multicolumn{2}{c|}{LP-\texttt{TKR}} & \multicolumn{2}{c|}{LP-\texttt{IRP}} \\
        & \% Saved & Time (s) & \% Saved & Time (s) & \% Saved & Time (s) & \% Saved & Time (s) \\
        \hline
        100000 & 13.5714 & 251.06 & 13.5714 & 263.3358 & 11.2605 & 9.6391 & 11.2857 & 51.3749 \\
        50000 & 10.6382 & 5326.43 & 10.6382 & 5531.013 & 10.0618 & 42.5712 & 10.2553 & 434.4422\\
        \hline
    \end{tabular}%
    }
    \caption{\% of nodes saved when varying the scale factor for the Texas dataset}\label{tab:percentage-saved-Texas-IC}

\end{table}

\subsection{Summary of results}
We summarize the experimental results below:

\begin{itemize}[leftmargin=*]
    \item (\S~\ref{subsubsection:quality-waxman} and \S~\ref{subsubsection:quality-renyi-erdos}) Among the different approaches, the greedy approach performs well and the results are comparable to Optimal. LS and HC improved the results generated from Greedy marginally. \texttt{LP}-\texttt{TKR} and \texttt{LP}-\texttt{IRP} produces good results, with \texttt{IRP} rounding always outperforming \texttt{TKR}.
    \item (\S~\ref{subsec:runtime-waxman} and \S~\ref{subsec:runtime-gw-ice}) The LP-based rounding methods perform extremely well (many orders of magnitude faster) in practice compared to the competitors. HC is many orders of magnitude slower than  others.
    \item (\S~\ref{subsec:varying-budget-vaccines}) Our meta-experiment with varying the vaccine budget shows that \emph{law of diminishing returns} applies to the vaccination setting for the linear threshold model. Initially, vaccinating critical nodes of the network provides the bang for the buck which seems very intuitive. 
    \item (\S~\ref{subsec:quality-runtime-Texas-NY}) Our real-world experiments with the linear threshold model show that our methods were able to save around 75\% of the population when we have a vaccine budget of 10\%. 
    \item (\S~\ref{subsubsection:quality-waxman-ICE} and \S~\ref{subsec:real-world-ice}) A 10\% vaccination budget is not sufficient for the independent cascade model and saves a very small fraction of total nodes.
    \item As a note on the experiments, \texttt{LP} rounding-based methods provide a good trade-off between time and quality, which would be ideal candidates for practitioners adapting our methods. 
\end{itemize}

\section{Discussion and Future Work}\label{sec:discussion-future-work}

Our empirical analysis illustrates the efficiency and efficacy of the different approaches. While these show good practical applicability, it would be good  to theoretically analyze these techniques and deduce an approximation factor. Additionally, a lower-bound analysis of the problem is an open problem in its own right. Another interesting direction that warrants further research pertains to the alternate models of disease spreading. While we analyze two of the most applicable and practical models of disease spreading, one could consider other models that might be applicable to lesser-studied modes of transmission of viruses, bacteria, and fungi.

\section{Conclusion}\label{sec:conclusion}

In this paper, we initiated research into the problem of disease spreading in real-world networks. More specifically, we defined the problem of finding $k$ important nodes in a graph which need to vaccinated to limit the spread of disease in the network. We proposed a sampling-based approach to overcome enumerating an exponential number of topologies in the network. As an algorithmic contribution, we devised a Linear Program with Binary variables that can solve our problem without any error for a given set of topologies. To further improve the runtime, we propose a range of algorithms which trade-off quality of the results and time. Toward this goal, we proposed an LP relaxation approach with two different rounding mechanisms, an intuitive Greedy approach, followed by a local search heuristic. We backed our algorithms with rigorous runtime analysis and comprehensive experimental evaluations.

\bibliographystyle{ACM-Reference-Format}
\bibliography{sample}

\end{document}